\documentclass{article}[11pt]

\usepackage{fullpage, amsmath, amsthm, amssymb}
\usepackage{xspace}

\newtheorem{lemma}{Lemma}
\newtheorem{theorem}{Theorem}
\newtheorem{definition}{Definition}
\newtheorem{corollary}{Corollary}

\newcommand{\zero}{\textbf{0}}
\newcommand{\R}{\mathbb{R}}

\newcommand{\E}{\mathbb{E}}

\newcommand{\Minimize}{\textsc{L2Minimize}\xspace}
\newcommand{\Sample}{\textsc{Sample}\xspace}
\newcommand{\SampleMany}{\textsc{SampleMany}\xspace}

\DeclareMathOperator{\herdisctwo}{herdisc_2}
\DeclareMathOperator{\herdisc}{herdisc_\infty}
\DeclareMathOperator{\disctwo}{disc_2}
\DeclareMathOperator{\disc}{disc_\infty}

\DeclareMathOperator{\lindisctwo}{lindisc_2}
\DeclareMathOperator{\tr}{tr}
\DeclareMathOperator{\vol}{vol}

\DeclareMathOperator{\spn}{span}
\DeclareMathOperator{\sign}{sign}
\DeclareMathOperator{\live}{live}
\DeclareMathOperator{\rank}{rank}

\newcommand{\Prin}{\mathcal{S}}

\newenvironment{customthm}[1]
  {\innercustomthm}
  {\endinnercustomthm}

\newenvironment{customcor}[1]
  {\innercustomcor}
  {\endinnercustomcor}

\setcounter{page}{0}
\title{Constructive Discrepancy Minimization with Hereditary L2
  Guarantees}

\author{Kasper Green
  Larsen\thanks{MADALGO. Aarhus
    University. \texttt{larsen@cs.au.dk}. Supported by a Villum Young
    Investigator Grant and an AUFF Starting Grant.}}

\begin{document}

\date{}
\maketitle

\begin{abstract}
In discrepancy
minimization problems, we are given a family of sets
$\mathcal{S} = \{S_1,\dots,S_m\}$, with each $S_i \in \mathcal{S}$ a subset of some universe
$U = \{u_1,\dots,u_n\}$ of $n$ elements. The goal is to find a
coloring $\chi : U \to \{-1,+1\}$ of the elements of $U$ such that each set $S \in
\mathcal{S}$ is colored as evenly as possible. Two classic measures of
discrepancy are $\ell_\infty$-discrepancy defined as $\disc(\mathcal{S},\chi):=\max_{S \in
  \mathcal{S}} | \sum_{u_i \in S} \chi(u_i) |$ and
$\ell_2$-discrepancy defined as $\disctwo(\mathcal{S},\chi):=\sqrt{(1/|\mathcal{S}|)\sum_{S \in
    \mathcal{S}} \left(\sum_{u_i \in S}
    \chi(u_i)\right)^2}$. Breakthrough work by Bansal [FOCS'10] gave a
polynomial time algorithm, based on rounding an SDP, for finding a coloring $\chi$ such that
$\disc(\mathcal{S},\chi) = O(\lg n \cdot \herdisc(\mathcal{S}))$ where
$\herdisc(\mathcal{S})$ is the hereditary $\ell_\infty$-discrepancy of
$\mathcal{S}$. We complement his work by giving a clean and simple $O((m+n)n^2)$ time algorithm for
finding a coloring $\chi$ such $\disctwo(\mathcal{S},\chi) =
O(\sqrt{\lg n} \cdot \herdisctwo(\mathcal{S}))$ where
$\herdisctwo(\mathcal{S})$ is the hereditary $\ell_2$-discrepancy of
$\mathcal{S}$. Interestingly, our algorithm avoids solving an SDP and
instead relies simply on computing eigendecompositions of
matrices. Moreover, we use some of the ideas in our algorithm to speed
up the Edge-Walk algorithm by Lovett and Meka [SICOMP'15] for
non-square matrices.

To prove that our algorithm has the claimed guarantees, we
also prove new inequalities relating both $\herdisc$ and $\herdisctwo$ to
the eigenvalues of the incidence matrix corresponding to $\mathcal{S}$. Our
inequalities improve over previous work by Chazelle and Lvov [SCG'00]
and by Matousek, Nikolov and Talwar [SODA'15+SCG'15]. We believe these inequalities are of independent interest as powerful tools for
proving hereditary discrepancy lower bounds. Finally, we also
implement our algorithm and show that it far outperforms random sampling
of colorings in practice. Moreover, the algorithm finishes in a
reasonable amount of time on matrices of sizes up to $10000 \times 10000$.
\end{abstract}

\thispagestyle{empty}
\newpage

\section{Introduction}
\label{sec:intro}
Combinatorial discrepancy minimization is an important field with
numerous applications in theoretical computer science, see e.g. the
excellent books by Chazelle~\cite{chazelle:discrepancy} and Matousek~\cite{matousek1999geometric}. In discrepancy
minimization problems, we are typically given a family of sets
$\mathcal{S} = \{S_1,\dots,S_m\}$, with each $S_i \in \mathcal{S}$ a subset of some universe
$U = \{u_1,\dots,u_n\}$ of $n$ elements. The goal is to find a
red-blue coloring of the elements of $U$ such that each set $S \in
\mathcal{S}$ is colored as evenly as possible. More formally, if we
define the $m \times n$ incidence matrix $A$ with $a_{i,j} = 1$ if
$u_j \in S_i$ and $a_{i,j}=0$ otherwise, then we seek a coloring $x
\in \{-1,+1\}^n$ minimizing either the $\ell_\infty$-discrepancy
$\disc(A,x) := \|Ax\|_\infty$ or the $\ell_2$-discrepancy
$\disctwo(A,x) = (1/\sqrt{m})\|Ax\|_2$. We say that the
$\ell_\infty$-discrepancy of $A$ is $\disc(A) := \min_{x \in
  \{-1,+1\}^n} \disc(A,x)$ and the $\ell_2$-discrepancy of $A$ is
$\disctwo(A) := \min_{x \in \{-1,+1\}^n} \disctwo(A,x)$.
With this matrix view, it is
clear that discrepancy minimization makes sense also for general
matrices and not just ones arising from set systems.

Much research has been devoted to understanding both the
$\ell_\infty$- and $\ell_2$-discrepancy of various families of
set systems and matrices. In particular set systems corresponding to
incidences between geometric objects such as axis-aligned rectangles
and points have been studied extensively, see
e.g.~\cite{matousekGamma, matousek:half, alexander:half, larsenDisc}. Another
fruitful line of research has focused on general matrices, including
the celebrated ``Six Standard Devitations Suffice'' result by
Spencer~\cite{spencer}, showing that any $n \times n$ matrix with
$|a_{i,j}|\leq 1$ admits a coloring $x \in \{-1,+1\}^n$ such that
$\disc(A,x) = O(\sqrt{n})$. Spencer also generalized this to $m \times
n$ matrices while guaranteeing $\disc(A,x) = O(\sqrt{n \ln(em/n)})$
when $m \geq n$. Finding low discrepancy colorings for set
systems where each element appears in at most $t$ sets (the matrix $A$
has at most $t$ non-zeroes per column, all bounded by $1$ in absolute value) has also
received much attention. Beck and Fiala~\cite{beckfiala} gave a deterministic algorithm that finds a
coloring $x$ with $\disc(A,x) = O(t)$. Banaszczyk~\cite{Banas:gauss} improved this to
$O(\sqrt{t \lg n})$ when $t \geq \lg n$. Determining whether a
discrepancy of $O(\sqrt{t})$ can be achieved remains one of the
biggest open problems in discrepancy minimization.

\paragraph{Constructive Discrepancy Minimization.}
Many of the original results, like Spencer's~\cite{spencer} and Banaszczyk's~\cite{Banas:gauss} were
purely existential and it was not clear whether polynomial time
algorithms finding such colorings were possible. In fact, Charikar et
al.~\cite{Charikar} presented very strong negative results in this
direction. More concretely, they proved that it is NP-hard to even
distinguish whether the $\ell_\infty$- or $\ell_2$-discrepancy of an
$n \times n$ set system is $0$ or $\Omega(\sqrt{n})$. The first major
breakthrough on the upper bound side was due to Bansal~\cite{bansal}, who
amongst others gave a polynomial time algorithm for finding a coloring
matching the bounds by Spencer. Brilliant follow-up work by Lovett and
Meka~\cite{lovett} gave a simple randomized algorithm, Edge-Walk, achieving the
same in $\tilde{O}((n+m)^3)$ running time for $m \times n$ matrices. A deterministic algorithm for Spencer's result was later given
by Levy et al.~\cite{levydeterm}, although with worse running time. A number of
constructive algorithms were also given for the ``sparse'' set system
case, finally resulting in polynomial time
algorithms~\cite{bansalKomlos, BansalBeyondPartial, BansalCure} matching the
existential results by Banaszczyk.

Another very surprising result in Bansal's seminal paper~\cite{bansal} shows
that, given a matrix $A$, one can find in polynomial time a coloring $x$ achieving an
$\ell_\infty$-discrepancy roughly bounded by the \emph{hereditary}
discrepancy of $A$. Hereditary discrepancy is a notion introduced by
Lov{\'{a}}sz et al.~\cite{lovasz} in order to prove discrepancy lower
bounds. The hereditary $\ell_\infty$-discrepancy of a matrix $A$ is
defined $
\herdisc(A) := \max_B \disc(B), 
$
where $B$ ranges over all matrices obtained by removing a subset of
the columns in $A$. In the terminology of set systems, the hereditary
discrepancy is the maximum discrepancy over all set systems obtained
by removing a subset of the elements in the universe. We also have an
analogous definition for hereditary $\ell_2$-discrepancy:
$\herdisctwo(A) := \max_B \disctwo(B)$. Based on rounding an SDP,
Bansal gave a polynomial time algorithm for finding a coloring $x$
achieving $\disc(A,x) = O(\lg n \herdisc(A))$. This is quite
surprising in light of the strong negative results by Charikar et
al.~\cite{Charikar}, since it shows that is is in fact possible to find a low
discrepancy coloring of an arbitrary matrix as long as all its submatrices
have low discrepancy.

\paragraph{Our Results Overview.}
Our main algorithmic result is an $\ell_2$ equivalent of Bansal's algorithm with
hereditary guarantees. More concretely, we give a polynomial time
algorithm for finding a coloring $x$ such that $\disctwo(A,x) =
O(\sqrt{\lg n} \cdot \herdisctwo(A))$. We note that neither our result nor
Bansal's approximately imply the other: In one direction, the coloring $x$
we find might have very low $\ell_2$ discrepancy, but a very large
value of $\|Ax\|_\infty$. In the other direction, $\herdisc(A)$ may be
much larger than $\herdisctwo(A)$, thus Bansal's algorithm does not give any
guarantees wrt. $\herdisctwo(A)$.

Our algorithm takes a very different
approach than Bansal's in the sense that we completely avoid solving
an SDP. Instead, we first prove a number of new inequalities relating
$\herdisctwo(A)$ and $\herdisc(A)$ to the eigenvalues of $A^TA$. Relating
hereditary discrepancy to the eigenvalues of $A^TA$ was also done by Chazelle
and Lvov~\cite{chazelleLvov} and by Matou{\v{s}}ek et
al.~\cite{factNorms}. However the result by Chazelle and Lvov is too
weak for our applications as it degenerates exponentially fast in the
ratio between $m$ and $n$. The result of Matou{\v{s}}ek et
al. could be used, but can only show that we find a coloring such that
$\disctwo(A,x) = O(\lg^{3/2} n \cdot \herdisctwo(A))$. We believe our new inequalities are of independent interest as
strong tools for proving discrepancy lower bounds. 

With these
inequalities established, we design a simple and efficient
algorithm, inspired by Beck and Fiala's~\cite{beckfiala} algorithm for sparse set
systems. Our key idea
is to find a coloring $x$ that is
almost orthogonal to all the eigenvectors of $A^TA$ corresponding to
large eigenvalues. This in turn means
that $\|Ax\|_2$ becomes bounded by $\herdisctwo(A)$.

As an interesting corollary of our technique, we also manage to speed
up the Lovett-Meka algorithm for non-square matrices. Amongst others,
this improves the running time for Spencer's six standard deviations
results from $\tilde{O}((n+m)^3)$ to $\tilde{O}(mn+n^3)$. 

We now proceed to present the previous results for proving lower
bounds on the hereditary discrepancy of matrices in order to set the
stage for presenting our new results.

\paragraph{Previous Hereditary Discrepancy Bounds.}
One of the most useful tools in proving lower bounds for hereditary
discrepancy is the determinant lower bound proved in the original
paper introducing hereditary discrepancy:
\begin{theorem}[Determinant Lower Bound (Lov{\'{a}}sz et al.~\cite{lovasz})]
\label{thm:determinantbound}
For an $m \times n$ real matrix $A$ it holds that
$$
\herdisc(A) \geq \max_k \max_B \frac{1}{2}|\det(B)|^{1/k},
$$
where $k$ ranges over all positive integers up to $\min\{n,m\}$ and
$B$ ranges over all $k \times k$ submatrices of $A$.
\end{theorem}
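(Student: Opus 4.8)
The plan is to reduce the statement to square matrices and then to run a volumetric ``rounding'' argument. First I would record that $\herdisc$ is monotone under passing to submatrices: deleting columns is immediate from the definition (a column subset of a column-restriction of $A$ is a column subset of $A$), and deleting rows can only decrease $\|Cx\|_\infty$, hence $\disc(C)$, for every column-restriction $C$. Thus for any $k \le \min\{m,n\}$ and any $k \times k$ submatrix $B$ of $A$ we have $\herdisc(A) \ge \herdisc(B)$, so it suffices to prove $\herdisc(B) \ge \tfrac12 |\det(B)|^{1/k}$ for an arbitrary $k \times k$ matrix $B$; we may assume $\det(B) \neq 0$ since otherwise the bound is vacuous.

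The heart of the argument is a rounding lemma: for every $p \in \R^k$ there is an integer vector $x \in \mathbb{Z}^k$ with $\|B(p-x)\|_\infty \le \herdisc(B)$. I would prove this first for dyadic $p$ by clearing the least significant bit repeatedly. If the current vector $q$ has all coordinates in $2^{-s}\mathbb{Z}$, let $S$ be the set of coordinates whose $2^{-s}$-bit is set, let $B_S$ be the submatrix of $B$ on the columns indexed by $S$, and pick an optimal coloring $\chi \in \{-1,+1\}^S$ for $B_S$, so $\|B_S\chi\|_\infty = \disc(B_S) \le \herdisc(B)$. Replacing $q$ by $q - 2^{-s}\tilde\chi$, where $\tilde\chi$ extends $\chi$ by zeros outside $S$, yields a vector with all coordinates in $2^{-(s-1)}\mathbb{Z}$ and changes $Bq$ by $2^{-s}B_S\chi$, i.e. by at most $2^{-s}\herdisc(B)$ in $\ell_\infty$. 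Iterating $s$ down to $1$ produces an integer $x$ with $\|B(p-x)\|_\infty \le \sum_{s=1}^{t} 2^{-s}\herdisc(B) < \herdisc(B)$. For a general real $p$ I would reduce modulo $\mathbb{Z}^k$, approximate by dyadic vectors $p^{(n)} \to p$, and use that $\{Bx : x \in \mathbb{Z}^k\}$ is discrete (as $\det B \neq 0$) to extract a subsequence along which the rounded integer vector is constant, then pass to the limit.

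Given the rounding lemma, set $\Delta = \herdisc(B)$ and $\Lambda = B\mathbb{Z}^k$. The lemma says precisely that the $\Lambda$-translates of the box $Q = [-\Delta,\Delta]^k$ cover $\R^k$. By the standard fact that any measurable body whose $\Lambda$-translates cover $\R^k$ has volume at least the covolume of $\Lambda$ (push everything to the torus $\R^k/\Lambda$ and compare measures), we get $(2\Delta)^k = \vol(Q) \ge \operatorname{covol}(\Lambda) = |\det(B)|$, i.e. $\herdisc(B) \ge \tfrac12 |\det(B)|^{1/k}$. Combining with the first paragraph and taking the maximum over $k$ and over all $k \times k$ submatrices $B$ of $A$ gives the theorem.

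The main obstacle is the rounding lemma: one has to choose the granularity correctly (clear the \emph{least} significant bit first) so that the per-scale errors form a geometric series instead of accumulating, and one has to be mildly careful reducing the general real case to the dyadic case via a compactness/discreteness argument. The reduction to square matrices and the volumetric conclusion are then routine, and the constant $\tfrac12$ simply reflects that the box $[-\Delta,\Delta]^k$ has side length $2\Delta$.
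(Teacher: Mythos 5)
The paper states Theorem~\ref{thm:determinantbound} as a known result of Lov\'asz, Spencer, and Vesztergombi and does not reprove it, so there is no in-paper proof to compare against directly; but your argument is a correct and essentially self-contained reconstruction of the original LSV proof, and it structurally parallels what the paper \emph{does} prove for the $\ell_2$ analogue (Theorem~\ref{thm:l2determinant} via Lemma~\ref{lem:linher} and Lemma~\ref{lem:lineardeterm}). Your ``rounding lemma'' is exactly the LSV bound $\lindisc(B) \le 2\herdisc(B)$ written in the $[0,1]^k \to \{0,1\}^k$ scaling (and then extended to all of $\R^k$ by reducing mod $\mathbb{Z}^k$), and the least-significant-bit-first clearing with the geometric error series $\sum_s 2^{-s}\herdisc(B) < \herdisc(B)$ is the standard way to get the right constant. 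The one stylistic difference is in the volume step: the paper (for $\ell_2$) covers the cube $[-1,1]^n$ by the $2^n$ translates of the dilated body and compares against $\vol([-1,1]^n)/2^n$, whereas you cover all of $\R^k$ by the lattice $\Lambda = B\mathbb{Z}^k$ and compare $\vol([-\Delta,\Delta]^k)$ against the covolume $|\det B|$. For square nonsingular $B$ these are equivalent and give the same constant $1/2$; the cube-covering phrasing used in the paper has the small advantage of not needing the full rounding lemma over all of $\R^k$ (only over $[-1,1]^k$), which would let you skip the dyadic-to-real compactness step, while your lattice phrasing is a bit cleaner once you have the full rounding lemma. Your monotonicity reduction $\herdisc(A) \ge \herdisc(B)$ and the reduction to $\det B \neq 0$ are both handled correctly. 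Overall this is a valid proof taking essentially the same route the paper gestures at (linear-discrepancy bridge plus a volume bound).
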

While it is easier to bound the max determinant of a submatrix $B$
than it is to bound the discrepancy of a matrix directly, it still
requires one to argue that we can find some $B$ where all eigenvalues are
non-zero. Chazelle and Lvov demonstrated how it suffices to bound the $k$'th
largest eigenvalue of a matrix in order to derive hereditary
discrepancy lower bounds: 
\begin{theorem}[Chazelle and Lvov~\cite{chazelleLvov}]
For an $m \times n$ real matrix $A$ with $m \leq n$, let $\lambda_1
\geq \cdots \geq \lambda_n \geq 0$ denote the eigenvalues of $A^TA$. For any integer $k
\leq m$, it holds that
$$
\herdisc(A) \geq \frac{1}{2}18^{-n/k} \sqrt{\lambda_k}.
$$
\end{theorem}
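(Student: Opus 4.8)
The plan is to reduce the statement to the determinant lower bound (Theorem~\ref{thm:determinantbound}). Since $\lambda_1 \geq \cdots \geq \lambda_n \geq 0$ are the eigenvalues of the positive semidefinite matrix $A^TA$ and $k \leq m = \min\{m,n\}$, the quantity $\sqrt{\lambda_k}$ is exactly the $k$-th largest singular value of $A$; in particular we may assume $\lambda_k > 0$, as otherwise the claimed bound is trivial. By Theorem~\ref{thm:determinantbound} it then suffices to exhibit a single $k \times k$ submatrix $B$ of $A$ with $|\det(B)| \geq 18^{-n}\lambda_k^{k/2}$, since that immediately gives $\herdisc(A) \geq \tfrac12|\det(B)|^{1/k} \geq \tfrac12 18^{-n/k}\sqrt{\lambda_k}$.

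To produce such a $B$ I would run an averaging argument over all $k \times k$ submatrices, driven by the Cauchy--Binet formula. For $k$-subsets $S \subseteq \{1,\dots,m\}$ of rows and $T \subseteq \{1,\dots,n\}$ of columns, write $A_{S,T}$ for the corresponding $k \times k$ submatrix and $A_T$ for the $m \times k$ submatrix consisting of the columns indexed by $T$. Cauchy--Binet gives $\det(A_T^TA_T) = \sum_{|S|=k}\det(A_{S,T})^2$, and $A_T^TA_T$ is precisely the principal $k \times k$ submatrix of $A^TA$ on index set $T$. Summing over all $T$ and using the standard fact that the sum of all principal $k\times k$ minors of a matrix equals the $k$-th elementary symmetric polynomial $e_k$ of its eigenvalues, we obtain
\[
\sum_{|S|=k,\,|T|=k}\det(A_{S,T})^2 \;=\; e_k(\lambda_1,\dots,\lambda_n) \;\geq\; \lambda_1\cdots\lambda_k \;\geq\; \lambda_k^k ,
\]
where the first inequality retains only the single term $\lambda_1\cdots\lambda_k$ (legitimate since all $\lambda_i \geq 0$) and the second uses $\lambda_1 \geq \cdots \geq \lambda_k$.

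There are $\binom{m}{k}\binom{n}{k}$ pairs $(S,T)$, so some pair achieves $\det(A_{S,T})^2 \geq \lambda_k^k / \bigl(\binom{m}{k}\binom{n}{k}\bigr)$. I would finish with a crude bound on the binomials: since $m \leq n$, we have $\binom{m}{k}\binom{n}{k} \leq \binom{n}{k}^2 \leq 4^n \leq 18^{2n}$, so the submatrix $B := A_{S,T}$ satisfies $|\det(B)| \geq \lambda_k^{k/2}/18^n$, as required. (Note that $4^n$ already yields the stronger constant $2$ in place of $18$, and even the sharper $\binom{n}{k} \leq (en/k)^k$ is available; the loose $18$ in the statement leaves ample slack.)

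I expect the only genuinely delicate point to be the combinatorial identity $\sum_{S,T}\det(A_{S,T})^2 = e_k(\lambda_1,\dots,\lambda_n)$, which should be verified carefully by combining Cauchy--Binet with the fact that the coefficients of the characteristic polynomial of $A^TA$ are (up to sign) the sums of its principal minors. Once that identity is established, the remainder is a one-line averaging step together with trivial estimates, so there is essentially no further obstacle.
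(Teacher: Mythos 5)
Your proof is correct. The paper itself does not re-prove this theorem — it cites Chazelle and Lvov — but your argument is exactly the technique the paper uses to prove its own strengthening, Theorem~\ref{thm:eigenInftyDetails}: combine the identity ``sum of $k\times k$ principal minors of $A^TA$ equals $e_k(\lambda_1,\dots,\lambda_n)$'' with Cauchy--Binet to find a single $k\times k$ submatrix $B$ of $A$ with $\det(B)^2 \geq \bigl(\prod_{i\le k}\lambda_i\bigr)/\bigl(\binom{n}{k}\binom{m}{k}\bigr)$, then feed it into the Lov\'asz et al.\ determinant lower bound. The only difference is organizational: the paper splits the double-averaging into two steps (Lemma~\ref{lem:detSubmatrix} produces the $m\times k$ submatrix $C$, then a second average over $\Prin_k(C)$ produces $B$), whereas you fold both into a single sum over pairs $(S,T)$ — these are the same computation. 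The crude estimate $\binom{m}{k}\binom{n}{k}\le \binom{n}{k}^2\le 4^n$ (valid under $m\le n$) then recovers the $18^{-n/k}$ form with room to spare, as you note; replacing it by $\binom{n}{k}\le (en/k)^k$ is precisely how the paper arrives at the stronger $\herdisc(A)\ge \frac{k}{2e}\sqrt{\lambda_k/(mn)}$ of Theorem~\ref{thm:eigenInfty}, which eliminates the exponential decay in $n/k$ and the restriction $m\le n$ altogether.
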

The result of Chazelle and Lvov has two substantial caveats. First,
it requires $m \leq n$. Since we will be using the \emph{partial
  coloring} framework, we will end up with matrices having very few
columns but many rows. This completely rules out using the above
result for analysing our new algorithm. Since $k \leq m$, the
lower bound also goes down exponentially fast in the gap between $m$
and $n$ (we note that Chazelle and Lvov didn't explicitly state that
one needs $k \leq m$,
but since $\rank(A) \leq m$, we have $\lambda_k = 0$ whenever $k > m$).

Chazelle and Lvov used their eigenvalue bound to prove the following trace bound
which has been very useful in the study of set systems corresponding
to incidences between geometric objects:
\begin{theorem}[Trace Bound (Chazelle and Lvov~\cite{chazelleLvov})]
For an $m \times n$ real matrix $A$ with $m \leq n$, let $M = A^TA$. Then:
$$
\herdisc(A) \geq \frac{1}{4}324^{-n \tr M^2 / \tr^2 M} \sqrt{\tr M/n}.
$$
\end{theorem}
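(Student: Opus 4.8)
The plan is to obtain the trace bound as an immediate corollary of the Chazelle--Lvov eigenvalue bound, applied to one carefully chosen index $k$. Write $M = A^TA$ and let $\lambda_1 \ge \cdots \ge \lambda_n \ge 0$ be its eigenvalues, so $\tr M = \sum_i \lambda_i$ and $\tr M^2 = \sum_i \lambda_i^2$; assume $A \neq \zero$, the only case in which the statement is nontrivial. Put $r := \tr^2 M/\tr M^2$, and note $r \ge 1$ since $(\sum_i \lambda_i)^2 \ge \sum_i \lambda_i^2$ for nonnegative reals. The eigenvalue bound states $\herdisc(A) \ge \tfrac12 \cdot 18^{-n/k}\sqrt{\lambda_k}$ for every integer $k \le m$, and because $324 = 18^2$, plugging in \emph{any} $k \ge r/2$ already yields the factor $18^{-2n/r} = 324^{-n/r} = 324^{-n\tr M^2/\tr^2 M}$ appearing in the claim. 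So it is enough to find an integer $k$ with $r/2 \le k \le m$ and $\lambda_k \ge \tr M/(4n)$; the leading $\tfrac14$ is then $\tfrac12$ from the eigenvalue bound times $\tfrac12$ from $\sqrt{\lambda_k}\ge \tfrac12\sqrt{\tr M/n}$.

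The candidate I would take is $k := \lceil r/2\rceil$, and the \emph{key claim} is that $\lambda_{\lceil r/2\rceil}\ge \tr M/(4n)$. I would prove this by contradiction. If $\lambda_k < \tr M/(4n)$, then $\lambda_i < \tr M/(4n)$ for all $i \ge k$, so $\sum_{i\ge k}\lambda_i < n\cdot \tr M/(4n) = \tfrac14\tr M$ and hence $\sum_{i<k}\lambda_i > \tfrac34\tr M$. Cauchy--Schwarz on these $k-1$ terms gives $\bigl(\sum_{i<k}\lambda_i\bigr)^2 \le (k-1)\sum_{i<k}\lambda_i^2 \le (k-1)\tr M^2$, which forces $k-1 > \tfrac{9}{16}\cdot\tr^2 M/\tr M^2 = \tfrac{9}{16}r$. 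But $k-1 = \lceil r/2\rceil - 1 < r/2 = \tfrac{8}{16}r$, a contradiction. (If $k=1$ the partial sum over $i<k$ is empty, directly contradicting $\sum_{i<k}\lambda_i > \tfrac34\tr M > 0$; alternatively $\lambda_1 \ge \tr M/n$ settles it at once.) I expect the only real subtlety to be this constant bookkeeping: the threshold $\tr M/(4n)$ and the index $\lceil r/2\rceil$ must be chosen so that the Cauchy--Schwarz estimate beats $r/2$ with just enough slack ($\tfrac{9}{16}$ versus $\tfrac{8}{16}$) to absorb the ceiling.

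Finally I would collect the pieces. Since $\lambda_k \ge \tr M/(4n) > 0$ we get $k \le \rank(A^TA) = \rank(A) \le m$, so $k$ is admissible in the eigenvalue bound; also $n/k = n/\lceil r/2\rceil \le 2n/r$. Hence
$$
\herdisc(A) \;\ge\; \tfrac12\cdot 18^{-n/k}\sqrt{\lambda_k}\;\ge\;\tfrac12\cdot 18^{-2n/r}\cdot\tfrac12\sqrt{\tr M/n}\;=\;\tfrac14\cdot 324^{-n\tr M^2/\tr^2 M}\sqrt{\tr M/n},
$$
which is exactly the claimed bound.
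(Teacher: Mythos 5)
Your proof is correct, and it follows the same route that Chazelle and Lvov take (which the paper sketches but does not reproduce in full): reduce the trace bound to the eigenvalue bound $\herdisc(A)\ge\tfrac12\,18^{-n/k}\sqrt{\lambda_k}$ by establishing that $\lambda_k\ge \tr M/(4n)$ for $k\approx \tfrac12\tr^2 M/\tr M^2$, then substitute. The paper explicitly attributes exactly this intermediate claim to Chazelle and Lvov's Lemma~2.4 and reuses it for its own strengthened Corollary~\ref{cor:trace}. The one cosmetic difference is in how the intermediate claim is verified: the paper's description has Chazelle and Lvov arguing via a uniformly random eigenvalue and second moments, whereas you argue directly with a Cauchy--Schwarz count on the top $k-1$ eigenvalues and a careful choice $k=\lceil r/2\rceil$; both are elementary and essentially equivalent, and your $\tfrac{9}{16}$-versus-$\tfrac{8}{16}$ slack correctly absorbs the ceiling. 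Your check that $\lambda_k>0$ forces $k\le\rank(A)\le m$, so that $k$ is admissible in the eigenvalue bound, is also the right point to make explicit.
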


Matou{\v{s}}ek et al.~\cite{factNorms} presented an alternative to the
result of Chazelle and Lvov, relating $\herdisc(A)$ and
$\herdisctwo(A)$ to the sum of singular values of $A$, i.e. they
proved:
\begin{theorem}[Matou{\v{s}}ek et al.~\cite{factNorms}]
For an $m \times n$ real matrix $A$, let $\lambda_1 \geq \cdots \geq
\lambda_n \geq 0$ denote the eigenvalues of $A^TA$. Then
$$
\herdisc(A) \geq \herdisctwo(A) = \Omega\left(\frac{1}{\lg n}\sum_{k=1}^n
\sqrt{\frac{\lambda_k}{mn}}\right).
$$
which for all positive integers $k \leq \min\{m,n\}$ implies:
$$
\herdisc(A) \geq \herdisctwo(A) = \Omega\left(\frac{k}{\lg n}
\sqrt{\frac{\lambda_k}{mn}}\right).
$$
\end{theorem}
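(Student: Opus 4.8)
\emph{Proof plan.} I would first dispose of the two easy parts. For any matrix $B$ obtained from $A$ by deleting columns, $B$ still has $m$ rows, so $\|Bx\|_\infty \ge m^{-1/2}\|Bx\|_2$ for every vector $x$; minimizing over $x$ in the discrete cube and then maximizing over such $B$ gives $\herdisc(A) \ge \herdisctwo(A)$. Moreover, since $\lambda_1 \ge \cdots \ge \lambda_n \ge 0$, we have $\sum_{j=1}^n \sqrt{\lambda_j} \ge \sum_{j=1}^k \sqrt{\lambda_j} \ge k\sqrt{\lambda_k}$, so the second displayed inequality follows from the first. It thus remains to prove $\herdisctwo(A) = \Omega\big(\tfrac{1}{\lg n}\sum_k \sqrt{\lambda_k/(mn)}\big)$.

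The strategy is to produce a column subset $J$ for which the submatrix $A_J$ has \emph{all} of its singular values large, and then to use $\herdisctwo(A) \ge \disctwo(A_J) \ge \sqrt{|J|/m}\cdot\sigma_{\min}(A_J)$ (the last step because $\|A_J x\|_2 \ge \sigma_{\min}(A_J)\|x\|_2 = \sigma_{\min}(A_J)\sqrt{|J|}$ for $x \in \{-1,+1\}^{|J|}$). To locate a good $J$, I would first isolate a well-conditioned spectral band: discard the eigenvalues with $\lambda_k < \lambda_1/n^4$ (they account for at most half of $\sum_k \sqrt{\lambda_k}$), so that the remaining $\sqrt{\lambda_k}$ lie in $[\sqrt{\lambda_1}/n^2,\sqrt{\lambda_1}]$ and hence in $O(\lg n)$ dyadic ranges; pick the range $[\tau,2\tau)$ carrying the most mass, so it contains $r$ of the $\sqrt{\lambda_k}$ with $r\tau = \Omega\big(\tfrac1{\lg n}\sum_k\sqrt{\lambda_k}\big)$. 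Let $P$ be the orthogonal projection onto the span of the corresponding eigenvectors of $M := A^TA$ and put $t := \tau^2$; then $M \succeq tP$, and consequently for any $J$,
\[
\sigma_{\min}(A_J)^2 \;=\; \lambda_{\min}\big(M_{J,J}\big) \;\ge\; t\,\lambda_{\min}\big(P_{J,J}\big) \;=\; t\,\sigma_{\min}\big(W_{J,:}\big)^2,
\]
where $W \in \R^{n\times r}$ is any matrix of orthonormal columns with $P = WW^T$ and $W_{J,:}$ is its $|J|\times r$ submatrix on the rows indexed by $J$. Hence it suffices to find $J$ with $|J| = \Omega(r)$ and $\sigma_{\min}(W_{J,:})^2 = \Omega(r/n)$: combining the inequalities then gives $\herdisctwo(A)^2 \ge \tfrac{|J|}{m}\,t\,\sigma_{\min}(W_{J,:})^2 = \Omega\big(t r^2/(mn)\big)$, and since $tr^2 = (r\tau)^2 = \Omega\big((\sum_k\sqrt{\lambda_k})^2/\lg^2 n\big)$, taking square roots yields exactly the claimed bound.

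The column selection is the crux, and I expect it to be the main obstacle. I would obtain it from \emph{restricted invertibility} (Bourgain--Tzafriri, in the quantitative form due to Spielman--Srivastava / Naor--Youssef) applied to the $r\times n$ matrix $T := W^T$, whose columns are the rows $w_1,\dots,w_n$ of $W$ and which satisfies $\|T\|_F^2 = \tr(W^TW) = r$ and $\|T\|_{\mathrm{op}}^2 = 1$: it furnishes a set $J$ with $|J| = \Theta(\|T\|_F^2/\|T\|_{\mathrm{op}}^2) = \Theta(r)$ and $\sigma_{\min}(T_{:,J})^2 = \Omega(\|T\|_F^2/n) = \Omega(r/n)$, and $\sigma_{\min}(T_{:,J}) = \sigma_{\min}(W_{J,:})$ is precisely the quantity needed above. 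The point is that the \emph{classical} column-selection / maximal-minor bound (the same flavor as the Determinant Lower Bound) only guarantees $\sigma_{\min}(W_{J,:})^2 = \Omega(1/(rn))$ for an $r$-element $J$ --- short of the required $\Omega(r/n)$ by a factor of $r$, i.e.\ by $\sqrt r$ in the discrepancy --- and this is tight in the worst case, so selecting a single $r\times r$ submatrix cannot prove the theorem; one genuinely needs to select $\Theta(r)$ columns simultaneously while keeping the projected part uniformly well conditioned, which is exactly what restricted invertibility provides. (An alternative that avoids citing restricted invertibility routes through the factorization norm $\gamma_2(A)$, which one lower-bounds by $\|A\|_{S_1}/\sqrt{mn} = \sum_k\sqrt{\lambda_k/(mn)}$ using the dual certificate $UV^T$ coming from the SVD $A = U\Sigma V^T$ together with the orthonormality of $U$ and $V$, at the cost of a more involved reduction from $\herdisctwo$ to $\gamma_2$.)
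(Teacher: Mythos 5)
The paper does not prove this theorem---it is quoted verbatim from Matou\v{s}ek, Nikolov, and Talwar---so there is no in-paper proof to compare against. Judged on its own merits, your argument is correct. The two easy reductions ($\herdisc\ge\herdisctwo$ via $\|\cdot\|_\infty\ge m^{-1/2}\|\cdot\|_2$; $\sum_j\sqrt{\lambda_j}\ge k\sqrt{\lambda_k}$) are fine, the dyadic bucketing of $\sqrt{\lambda_k}$ and the bound $r\tau=\Omega(\frac{1}{\lg n}\sum_k\sqrt{\lambda_k})$ are fine, the inequalities $\herdisctwo(A)\ge\disctwo(A_J)\ge\sqrt{|J|/m}\,\sigma_{\min}(A_J)$ and $\sigma_{\min}(A_J)^2\ge t\,\sigma_{\min}(W_{J,:})^2$ are correct, and your invocation of restricted invertibility ($\|T\|_F^2=r$, $\|T\|_{\mathrm{op}}=1$, hence $|J|=\Theta(r)$ with $\sigma_{\min}(T_{:,J})^2=\Omega(r/n)$) is the right form of the Spielman--Srivastava / Naor--Youssef theorem.

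Where I would push back is on your parenthetical claim that column selection by a max-volume principal minor ``cannot prove the theorem.'' That claim is true only under the specific reduction $\disctwo(A_J)\ge\sqrt{|J|/m}\,\sigma_{\min}(A_J)$, which throws away everything except the smallest singular value. If instead one lower-bounds $\disctwo(A_J)$ by the \emph{geometric mean} of the spectrum---precisely what the volumetric argument gives, cf.\ Theorem~\ref{thm:l2determinant} of this paper, $\herdisctwo(A_J)\ge\sqrt{|J|/(8\pi e m)}\det(A_J^TA_J)^{1/2|J|}$---then the classical Cauchy--Binet selection (Lemma~\ref{lem:detSubmatrix}) already suffices: taking all columns corresponding to eigenvalues $\ge\tau^2$ (there are $k\ge r$ of them) and plugging into Corollary~\ref{thm:l2eigen} gives $\herdisctwo(A)\ge\Omega(r\tau/\sqrt{mn})$, recovering the theorem with the same $\lg n$ loss from bucketing and \emph{without} restricted invertibility. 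Indeed Corollary~\ref{thm:l2eigen} in this paper is strictly stronger than the second display of the MNT theorem (it drops the $\lg n$ entirely). So the dichotomy you set up---``restricted invertibility or bust''---is an artifact of your choice of $\sigma_{\min}$ as the bridge; the determinant bridge is both lighter and sharper. Your restricted-invertibility route is nonetheless a legitimate alternative proof (closer in spirit to Nikolov--Talwar's ellipsoid/restricted-invertibility work than to MNT's original $\gamma_2$-factorization-norm argument, which is the route you flag at the end).
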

Comparing the bound to the result of Chazelle and Lvov, we see
that the loss in terms of the ratio between $k$ and $n$ is much
better. However for $k,m$ and $n$ all within a constant factor of each
other, Chazelle and Lvov's bound implies $\herdisc(A) =
\Omega(\sqrt{\lambda_k})$ whereas the bound of Matou{\v{s}}ek et
al. loses a $\lg n$ factor and gives $\herdisc(A) \geq \herdisctwo(A)
= \Omega(\sqrt{\lambda_k}/\lg n)$ (strictly speaking, the bound in
terms of the sum of $\sqrt{\lambda_k}$'s is incomparable, but the
bound only in terms of the $k$'th largest eigenvalue does lose this factor).

\paragraph{Our Results.}
We first give a new inequality relating $\herdisc(A)$ to the
eigenvalues of
$A^TA$, simultaneously improving over the previous bounds by Chazelle
and Lvov, and by Matou{\v{s}}ek et al.:
\begin{theorem}
\label{thm:eigenInfty}
For an $m \times n$ real matrix $A$, let $\lambda_1 \geq \lambda_2
\geq \cdots \geq \lambda_n \geq 0$ denote the eigenvalues of
$A^TA$. For all positive integers $k \leq \min\{n,m\}$, we have 
$$
\herdisc(A) \geq \frac{k}{2e}\sqrt{\frac{\lambda_k}{m n}}.
$$
\end{theorem}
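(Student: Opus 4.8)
The plan is to derive the bound from the determinant lower bound (Theorem~\ref{thm:determinantbound}) via an averaging argument over all $k\times k$ submatrices of $A$. It suffices to exhibit a $k\times k$ submatrix $B$ of $A$ with $|\det B|^{1/k}\geq (k/e)\sqrt{\lambda_k/(mn)}$, since Theorem~\ref{thm:determinantbound} then yields $\herdisc(A)\geq \tfrac12|\det B|^{1/k}\geq \tfrac{k}{2e}\sqrt{\lambda_k/(mn)}$. Such submatrices exist because $k\leq\min\{m,n\}$.

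The key identity I would use is that the sum of squared determinants of all $k\times k$ submatrices of $A$ equals the $k$-th elementary symmetric polynomial $e_k(\lambda_1,\dots,\lambda_n)$ of the eigenvalues of $A^TA$. This follows from two applications of the Cauchy--Binet formula: for a fixed $k$-subset $J$ of the columns, $\sum_{|I|=k}(\det A[I,J])^2=\det\big((A[\,\cdot\,,J])^T A[\,\cdot\,,J]\big)=\det\big((A^TA)[J,J]\big)$, the principal $k\times k$ minor of $A^TA$ on $J$; and summing these principal minors over all $J$ gives $e_k(\lambda_1,\dots,\lambda_n)$ (a coefficient of the characteristic polynomial of $A^TA$). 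Since every $\lambda_i\geq 0$, this symmetric polynomial is at least its single summand $\prod_{i=1}^k\lambda_i$, which is in turn at least $\lambda_k^k$ as $\lambda_1\geq\cdots\geq\lambda_k$.

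Averaging over the $\binom{m}{k}\binom{n}{k}$ pairs $(I,J)$ then produces a $k\times k$ submatrix $B=A[I,J]$ with $(\det B)^2\geq \lambda_k^k/\big(\binom{m}{k}\binom{n}{k}\big)$. Plugging in the standard estimate $\binom{m}{k}\binom{n}{k}\leq (em/k)^k(en/k)^k=(e^2mn/k^2)^k$ gives $|\det B|^{1/k}\geq \big(k^2\lambda_k/(e^2mn)\big)^{1/2}=(k/e)\sqrt{\lambda_k/(mn)}$, which is exactly what is needed.

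The argument is essentially a pigeonhole argument layered on top of the determinant bound, so I do not expect a serious obstacle; the one place demanding care is the Cauchy--Binet bookkeeping — correctly identifying $\sum_{|I|=k}(\det A[I,J])^2$ with the $J$-principal minor of $A^TA$ and then recognizing the sum of all such principal minors as an elementary symmetric function of the eigenvalues — where keeping the two summations and their index sets straight is the only subtlety.
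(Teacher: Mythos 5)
Your proof is correct and follows essentially the same route as the paper: both rest on the Cauchy--Binet formula, the identity expressing the $k$-th elementary symmetric polynomial of the eigenvalues of $A^TA$ as the sum of its $k\times k$ principal minors, a pigeonhole/averaging step over submatrices, the estimate $\binom{n}{k}\leq(en/k)^k$, and the determinant lower bound. The only (cosmetic) difference is that the paper averages in two stages — first over the $\binom{n}{k}$ column subsets to extract an $m\times k$ submatrix $C$ with large $\det(C^TC)$ (Lemma~\ref{lem:detSubmatrix}), and then over the $\binom{m}{k}$ row subsets of $C$ via Cauchy--Binet (Theorem~\ref{thm:eigenInftyDetails}) — while you average once over all $\binom{m}{k}\binom{n}{k}$ pairs $(I,J)$ simultaneously; these yield the identical bound, and the paper's modular presentation exists mainly so that Lemma~\ref{lem:detSubmatrix} can be reused in the $\ell_2$ case.
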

Notice that our lower bound goes down as $k/\sqrt{m n}$ whereas
Chazelle and Lvov's goes down as $18^{-n/k}$ and requires $m \leq
n$. Thus our loss is exponentially better than theirs. Compared to the
bound by Matou{\v{s}}ek et al., we avoid the $\lg n$ loss (at least
compared to the bound of Matou{\v{s}}ek et al. that is only in terms
of the $k$'th largest eigenvalue and not the sum of eigenvalues).

 Re-executing
Chazelle and Lvov's proof of the trace bound with the above lemma in place of theirs
immediately gives a stronger version of the trace bound as well:
\begin{corollary}
\label{cor:trace}
For an $m \times n$ real matrix $A$, let $M = A^TA$. Then:
$$
\herdisc(A) \geq \frac{\tr^2 M}{8e \min\{n,m\} \tr M^2}\sqrt{\frac{\tr M}{
    \max\{m,n\}}}.
$$
\end{corollary}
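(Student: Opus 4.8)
The plan is to simply re-run Chazelle and Lvov's derivation of their trace bound, substituting Theorem~\ref{thm:eigenInfty} for the weaker eigenvalue inequality they used. Write $M = A^TA$ and let $\lambda_1 \ge \lambda_2 \ge \cdots \ge \lambda_n \ge 0$ be its eigenvalues; abbreviate $t := \tr M = \sum_i \lambda_i$ and $s := \tr M^2 = \sum_i \lambda_i^2$, and set $r := \min\{m,n\}$. Since $\rank M = \rank A \le r$, at most $r$ of the $\lambda_i$ are nonzero. The whole task is to pin down one index $k$ with $1 \le k \le r$ for which both $k$ and $\lambda_k$ are large compared to $t$ and $s$, and then invoke Theorem~\ref{thm:eigenInfty} at that $k$; the rest is algebraic simplification. (If $A = \zero$ the claimed bound is vacuous, so assume $t > 0$.)

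To produce such a $k$ I would run a two-sided counting argument against the threshold $\tau := t/(4r)$. On one side, the nonzero eigenvalues that lie below $\tau$ number at most $r$, so they contribute less than $r\tau = t/4$ to the trace, and therefore the eigenvalues that are at least $\tau$ contribute more than $3t/4$; writing $k^\ast$ for their count, these are exactly $\lambda_1,\dots,\lambda_{k^\ast}$, so $\sum_{i=1}^{k^\ast}\lambda_i > 3t/4$. On the other side, Cauchy--Schwarz gives $\sum_{i=1}^{k^\ast}\lambda_i \le \sqrt{k^\ast \sum_{i=1}^{k^\ast}\lambda_i^2} \le \sqrt{k^\ast s}$, and combining the two bounds yields $k^\ast > 9t^2/(16s)$. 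Now set $k := \lceil 9t^2/(16s)\rceil$. From $t^2 \le rs$ (Cauchy--Schwarz on the at most $r$ nonzero eigenvalues) one gets $k \le r$, and since $k^\ast$ is an integer exceeding $9t^2/(16s)$ one has $k \le k^\ast$, so by monotonicity $\lambda_k \ge \lambda_{k^\ast} \ge \tau$. Feeding $k \ge 9t^2/(16s)$ and $\lambda_k \ge t/(4r)$ into Theorem~\ref{thm:eigenInfty},
\[
\herdisc(A) \;\ge\; \frac{k}{2e}\sqrt{\frac{\lambda_k}{mn}} \;\ge\; \frac{9t^2}{32 e s}\sqrt{\frac{t}{4r\,mn}} \;=\; \frac{9t^2}{64 e s}\sqrt{\frac{t}{r\,mn}}.
\]

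It remains to simplify. Since $mn = \min\{m,n\}\max\{m,n\}$, we have $r\,mn = \min\{m,n\}^2\max\{m,n\}$, hence $\sqrt{t/(r\,mn)} = \frac{1}{\min\{m,n\}}\sqrt{t/\max\{m,n\}}$; using $9/64 \ge 1/8$ the bound becomes exactly $\herdisc(A) \ge \frac{\tr^2 M}{8e\min\{n,m\}\tr M^2}\sqrt{\tr M / \max\{m,n\}}$. I expect the only finicky part to be the constant bookkeeping: with a general threshold $\tau = t/(cr)$ the same computation yields the prefactor $\tfrac12 c^{-1/2}(1-1/c)^2$, which is maximized near $c=5$ and needs to come out at least $1/8$; the value $c=4$ gives $9/64$ and works cleanly, but one still has to check that rounding $9t^2/(16s)$ up keeps $k$ inside $\{1,\dots,\min\{k^\ast,r\}\}$ so that Theorem~\ref{thm:eigenInfty} genuinely applies.
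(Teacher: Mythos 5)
Your proposal is correct and follows essentially the same route as the paper: both re-run Chazelle and Lvov's trace-bound derivation, replacing their eigenvalue inequality with Theorem~\ref{thm:eigenInfty}, using the threshold $\tr M/(4\min\{m,n\})$ to locate an index $k=\Theta(\tr^2 M/\tr M^2)$ with $\lambda_k$ large and then plugging in. The only difference is that you spell out the Chazelle--Lvov counting/Cauchy--Schwarz argument in full (and handle the integrality of $k$ via a ceiling), where the paper just cites their Lemma~2.4 with a parenthetical strengthening; your constant $9/64$ cleanly dominates the claimed $1/8$.
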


In establishing lower bounds on $\herdisctwo(A)$ in terms of
eigenvalues, we need to first prove an equivalent of the determinant
lower bound for non-square matrices (and for $\ell_2$-discrepancy
rather than $\ell_\infty$):
\begin{theorem}
\label{thm:l2determinant}
For an $m \times n$ real matrix $A$, we have
$$
\herdisc(A) \geq \herdisctwo(A) \geq \sqrt{\frac{n}{8 \pi e m} } \det(A^TA)^{1/2n}.
$$
\end{theorem}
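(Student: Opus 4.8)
The plan is to handle the two inequalities separately. The inequality $\herdisc(A)\ge\herdisctwo(A)$ is immediate: since $\|v\|_\infty\ge\|v\|_2/\sqrt m$ for every $v\in\R^m$, we get $\disc(B,x)\ge\disctwo(B,x)$ for every coloring $x$ and every matrix $B$ obtained from $A$ by deleting columns, and taking the minimum over $x$ and then the maximum over $B$ finishes it. For the main inequality I would first dispose of the degenerate case: if $\rank(A)<n$ then $\det(A^TA)=0$ and nothing is to prove, so assume $\rank(A)=n$, whence $m\ge n$ and $A$ is injective. Then $\Lambda:=A\mathbb Z^n$ is a full-rank lattice in the $n$-dimensional column space of $A$, with covolume $\sqrt{\det(A^TA)}$ (the Gram determinant), and the whole problem becomes one of bounding the Euclidean covering radius $\rho(\Lambda)$ of this lattice in terms of $t:=\herdisctwo(A)$.

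The heart of the proof --- and the step I expect to be the real work --- is an $\ell_2$ analogue of the key lemma behind the classical determinant lower bound (Theorem~\ref{thm:determinantbound}): $\rho(\Lambda)\le t\sqrt m$, i.e.\ every point of $\spn_{\R}(\text{columns of }A)$ lies within Euclidean distance $t\sqrt m$ of $\Lambda$. To prove it, note first that $u\mapsto\mathrm{dist}_2(Au,\Lambda)$ is $\|A\cdot\|_2$-Lipschitz and invariant modulo $\mathbb Z^n$, so by density it suffices to treat dyadic $u=p/2^N\in[0,1)^n$ with $p\in\{0,\dots,2^N-1\}^n$ and to produce $z\in\mathbb Z^n$ with $\|A(u-z)\|_2<t\sqrt m$. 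I would process the binary digits $d_0,\dots,d_{N-1}\in\{0,1\}^n$ of $p$ from least to most significant; when digit $j$ is reached (after absorbing the carries from the lower digits it is back in $\{0,1\}^n$), use $\herdisctwo(A_{\supp(d_j)})\le t$ to pick an optimal $\{-1,+1\}$-coloring of those columns and let $\delta_j\in\{-1,0,1\}^n$ be that coloring padded with zeros outside $\supp(d_j)$, so that $\delta_j\equiv d_j\pmod 2$ and $\|A\delta_j\|_2=\disctwo(A_{\supp(d_j)})\sqrt m\le t\sqrt m$; then replace digit $j$ by $\delta_j$ and carry $(d_j-\delta_j)/2\in\{0,1\}^n$ into digit $j+1$. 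After all digits are processed, $w:=\sum_{j=0}^{N-1}2^j\delta_j\equiv p\pmod{2^N}$, so with $z:=(p-w)/2^N\in\mathbb Z^n$,
$$\|A(u-z)\|_2=2^{-N}\|Aw\|_2\le 2^{-N}\sum_{j=0}^{N-1}2^j\|A\delta_j\|_2\le\frac{2^N-1}{2^N}\,t\sqrt m<t\sqrt m.$$
Processing the digits from low to high is what turns the accumulated error into a geometric series, and is exactly what avoids the extra $\lg n$ factor that a naive round-by-halves partial-coloring argument would cost.

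Finally I would convert this into the stated bound purely by volume bookkeeping. The Voronoi cell of $\Lambda$ is a fundamental domain of volume $\sqrt{\det(A^TA)}$ and is contained in the Euclidean ball of radius $\rho(\Lambda)$, so
$$\sqrt{\det(A^TA)}\le\rho(\Lambda)^n\cdot\vol(B_2^n)\le(t\sqrt m)^n\cdot\frac{\pi^{n/2}}{\Gamma(\tfrac n2+1)}.$$
Stirling's inequality $\Gamma(\tfrac n2+1)\ge\sqrt{\pi n}\,(n/2e)^{n/2}$ gives $\vol(B_2^n)\le(\pi n)^{-1/2}(2\pi e/n)^{n/2}$, and taking $2n$-th roots yields $\det(A^TA)^{1/2n}\le t\sqrt m\,\sqrt{2\pi e/n}$, i.e.\ $\herdisctwo(A)=t\ge\sqrt{\tfrac n{2\pi e m}}\,\det(A^TA)^{1/2n}$, which is at least the claimed $\sqrt{\tfrac n{8\pi e m}}\det(A^TA)^{1/2n}$. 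So the only genuinely delicate point is the covering-radius lemma; the rest is the norm comparison, the degenerate-rank case, and Stirling's formula.
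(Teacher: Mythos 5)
Your proof is correct, and while the underlying mechanism is the same (a rounding lemma followed by a volume argument), you package it rather differently from the paper. The paper introduces the linear $\ell_2$-discrepancy $\lindisctwo(A)$, cites Lov{\'a}sz et al.\ for $\lindisctwo(A)\leq 2\herdisctwo(A)$ (noting the $\ell_\infty$ proof carries over since $U_A$ is convex and symmetric), interprets $\lindisctwo$ as the least $t$ for which copies of $tU_A$ at the $2^n$ points of $\{-1,+1\}^n$ cover $[-1,1]^n$, and then applies a pigeonhole volume bound $\vol(\lindisctwo(A)U_A)\geq 1$. You instead phrase everything as a covering-radius bound for the lattice $\Lambda=A\mathbb{Z}^n$ and prove that bound from scratch by the binary-digit rounding you describe --- which is in fact essentially the proof of the Lov{\'a}sz et al.\ transfer lemma, carried out in the $\ell_2$ norm rather than cited. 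The more interesting divergence is in the volume step: the paper's cube-covering pigeonhole is lossy (the ellipsoids centered at the cube's vertices mostly stick out of the cube), whereas your Voronoi-cell-inside-a-ball argument is tight, so you land at $\herdisctwo(A)\geq\sqrt{n/(2\pi e m)}\,\det(A^TA)^{1/2n}$, a factor $2$ stronger than the stated $\sqrt{n/(8\pi e m)}$ bound. (One small bookkeeping point in the digit argument: after a carry arrives, the current digit actually lies in $\{0,1,2\}^n$, not $\{0,1\}^n$; you then round only the odd coordinates to $\pm1$ and forward $\lfloor d'_j/2\rfloor+(d'_j\bmod 2-\delta_j)/2$, and these two pieces have disjoint supports, so the new carry stays in $\{0,1\}^n$. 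The argument survives, but as written the parenthetical ``it is back in $\{0,1\}^n$'' elides that.)
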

We remark that proving Theorem~\ref{thm:l2determinant}
for the $\ell_\infty$-case appears as an exercise
in~\cite{matousek1999geometric} and we make no claim that the proof of
Theorem~\ref{thm:l2determinant} requires any new or deep insights (we
suspect that it is folklore, but have not been able to find a
mentioning of the above theorem in the literature). 
We finally arrive at our main result for lower bounding hereditary $\ell_2$-discrepancy:
\begin{corollary}
\label{thm:l2eigen}
For an $m \times n$ real matrix $A$, let $\lambda_1 \geq \lambda_2
\geq \cdots \geq \lambda_n \geq 0$ denote the eigenvalues of
$A^TA$. For all positive integers $k \leq \min\{n,m\}$, we have 
$$
\herdisctwo(A) \geq \frac{k}{e} \sqrt{ \frac{\lambda_k}{8 \pi m n}}.
$$
\end{corollary}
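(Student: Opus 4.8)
The plan is to derive this exactly as Theorem~\ref{thm:eigenInfty} is derived from the classical determinant bound, but using the $\ell_2$ determinant bound of Theorem~\ref{thm:l2determinant} in its place: select a $k$-column submatrix $B$ of $A$ with $\det(B^TB)$ large via an averaging argument over column subsets, and feed it into Theorem~\ref{thm:l2determinant}.

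First I would record the (immediate) monotonicity of hereditary $\ell_2$-discrepancy under column deletion: if $B$ is obtained from $A$ by deleting columns, then every column-submatrix of $B$ is also a column-submatrix of $A$, so $\herdisctwo(A) \geq \herdisctwo(B)$; note the row count $m$, which enters the normalization of $\disctwo$, is unchanged. Next comes the key identity. For a set $C \subseteq \{1,\dots,n\}$ of $k$ column indices let $A_C$ be the $m \times k$ matrix of those columns; then $\det(A_C^TA_C)$ is precisely the $k \times k$ principal minor of $M = A^TA$ indexed by $C$. Since the sum of all $k \times k$ principal minors of $M$ equals the $k$-th elementary symmetric polynomial of its eigenvalues (Cauchy--Binet / the standard expansion of the characteristic polynomial), we get
$$\sum_{|C| = k} \det(A_C^TA_C) = e_k(\lambda_1,\dots,\lambda_n) \geq \lambda_1\lambda_2\cdots\lambda_k \geq \lambda_k^k,$$
using that the $\lambda_i$ are nonnegative and sorted. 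There are $\binom{n}{k}$ terms in the sum, so averaging produces a set $C^\star$ with $\det(A_{C^\star}^TA_{C^\star}) \geq \lambda_k^k / \binom{n}{k}$ (if $\lambda_k = 0$ the corollary is trivial, so assume $\lambda_k > 0$, making this quantity positive).

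Then I would apply Theorem~\ref{thm:l2determinant} to the $m \times k$ matrix $B := A_{C^\star}$ — with $k$ playing the role of the number of columns and $m$ unchanged — and combine with monotonicity:
$$\herdisctwo(A) \geq \herdisctwo(B) \geq \sqrt{\frac{k}{8\pi e m}}\,\det(B^TB)^{1/2k} \geq \sqrt{\frac{k}{8\pi e m}}\cdot\frac{\sqrt{\lambda_k}}{\binom{n}{k}^{1/2k}}.$$
Finally, plugging in $\binom{n}{k} \leq (en/k)^k$, hence $\binom{n}{k}^{1/2k} \leq \sqrt{en/k}$, the right-hand side is at least $\sqrt{\frac{k}{8\pi e m}\cdot\frac{k}{en}\cdot\lambda_k} = \frac{k}{e}\sqrt{\frac{\lambda_k}{8\pi m n}}$, which is the claimed bound.

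I do not expect a genuine obstacle here; granted Theorem~\ref{thm:l2determinant}, this is a short averaging argument. The two points needing care are the exponent bookkeeping in the averaging step (the $\det(B^TB)^{1/2k}$ in the determinant bound turns the averaged estimate $\lambda_k^k/\binom{n}{k}$ into $\sqrt{\lambda_k}/\binom{n}{k}^{1/2k}$), and correctly invoking Theorem~\ref{thm:l2determinant} with the submatrix's column count in the role of $n$ while keeping the original $m$. The principal-minor identity $\sum_{|C|=k}\det(A_C^TA_C) = e_k(\lambda_1,\dots,\lambda_n)$ is standard, but I would include a one-line justification for completeness.
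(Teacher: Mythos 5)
Your argument is correct and is essentially identical to the paper's proof: the paper likewise selects an $m \times k$ column-submatrix with $\det(C^TC) \geq (\prod_{i=1}^k \lambda_i)/\binom{n}{k}$ (stated as Lemma~\ref{lem:detSubmatrix}, which you re-derive inline via the elementary-symmetric-polynomial/principal-minor identity), feeds it into Theorem~\ref{thm:l2determinant}, and finishes with monotonicity and $\binom{n}{k}\leq (en/k)^k$. The only cosmetic difference is that you bound $e_k \geq \lambda_k^k$ directly instead of passing through $\prod_{i\leq k}\lambda_i$, which changes nothing.
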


We note that Theorem~\ref{thm:eigenInfty} actually follows (up to
constant factors) from
Corollary~\ref{thm:l2eigen} using the fact that $\herdisc(A) \geq
\herdisctwo(A)$, but we will present separate proofs of the two
theorems since the direct proof of Theorem~\ref{thm:eigenInfty} is
very short and crisp.

The exciting part in having established Corollary~\ref{thm:l2eigen}, is
that it hints the direction for giving an efficient algorithm for
obtaining colorings $x$ with $\disctwo(A,x)$ being bounded by some
function of $\herdisctwo(A)$. More concretely, we give an algorithm
that is based on computing an eigendecomposition of $A^TA$ and using this to
perform partial coloring that is orthogonal to the eigenvectors corresponding
to the largest eigenvalues. Via Corollary~\ref{thm:l2eigen}, this
gives a coloring with hereditary $\ell_2$ guarantees. The precise
guarantees of our algorithm are given in the following:
\begin{theorem}
\label{thm:algo}
There is an $O((m+n)n^2)$ time algorithm that given an $m
\times n$ matrix $A$, computes a coloring $x \in \{-1,+1\}^n$
satisfying $\disctwo(A,x) = O(\sqrt{\lg n} \cdot \herdisctwo(A))$.
\end{theorem}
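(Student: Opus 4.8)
The plan is to combine the eigenvalue lower bound from Corollary~\ref{thm:l2eigen} with a partial-coloring scheme in the spirit of Beck--Fiala. The key realization is that if a coloring $x$ is (nearly) orthogonal to the span of the top eigenvectors of $M = A^TA$, then $\|Ax\|_2^2 = x^TMx$ is controlled by the remaining eigenvalues, and Corollary~\ref{thm:l2eigen} tells us these are all $O(m n \cdot \herdisctwo(A)^2 / k^2)$ once we drop the top $\Theta(k)$ of them. So the first step is to fix a threshold $k$ and let $V \subseteq \R^n$ be the span of the eigenvectors of $M$ with the $k$ largest eigenvalues; then for any $x \perp V$ we get $\|Ax\|_2^2 \le \lambda_{k+1}\|x\|_2^2 \le O(m n / k^2)\cdot \herdisctwo(A)^2 \cdot \|x\|_2^2$, and with $\|x\|_2^2 = n$ for a full $\pm1$ coloring this is $O(m n^2 / k^2)\cdot\herdisctwo(A)^2$, i.e. $\disctwo(A,x)^2 = (1/m)\|Ax\|_2^2 = O(n^2/k^2)\herdisctwo(A)^2$. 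Picking $k = \Theta(n/\sqrt{\lg n})$ would give the claimed $O(\sqrt{\lg n})$ bound --- but of course we cannot make a genuine $\pm1$ coloring exactly orthogonal to a $k$-dimensional subspace, so the real work is in making this approximate.

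The main step is therefore an iterative partial-coloring/rounding procedure that I would structure as follows. Maintain a fractional coloring $x \in [-1,1]^n$, starting at $x = \zero$, together with the set of ``live'' coordinates $\Live = \{i : |x_i| < 1\}$. As long as $|\Live|$ exceeds the subspace dimension $k$ (appropriately chosen), the constraints ``$x_i$ is fixed for $i \notin \Live$'' together with ``$x$ is orthogonal to the top-$k$ eigenvectors of $M$, restricted to the live coordinates'' form an affine subspace of dimension at least $|\Live| - k \ge 1$; I can therefore move $x$ along a line inside $[-1,1]^n \cap \{x : x \perp V\}$ until some coordinate hits $\pm1$ and gets frozen. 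This is exactly the Beck--Fiala walk, except instead of freezing coordinates to kill set-constraints we freeze them to stay inside a low-dimensional orthogonality-constrained polytope. The process terminates with at most $k$ unfrozen coordinates, and throughout we have kept $Ax$ essentially inside the span of the small eigenvalues. At the end, round the $\le k$ remaining fractional coordinates arbitrarily to $\pm1$; since each such coordinate contributes a column of $A$ and $\|A e_i\|_2 \le \sqrt{\lambda_1} \le \sqrt{\tr M} = \sqrt{\|A\|_F^2}$, the rounding error in $\|Ax\|_2$ is at most $k \cdot \max_i \|Ae_i\|_2$, which must also be charged against $\herdisctwo(A)$ --- here one uses that $\max_i \|Ae_i\|_2 \le \sqrt{\lambda_{\text{something}}}$ bounded via the eigenvalue inequality again, or more directly that a single column $Ae_i$ has $\|Ae_i\|_2/\sqrt m \le \disctwo$ of the one-element system, hence $\le \herdisctwo(A)$.

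Balancing the two error sources is where the $\sqrt{\lg n}$ actually comes from. With $k$ live coordinates left over we pay $\disctwo(A,x) \le \|Ax_{\text{partial}}\|_2/\sqrt m + k \cdot \herdisctwo(A)$ on the rounding side and $O(n/k)\cdot\herdisctwo(A)$ from the orthogonal-to-top-$k$-eigenvectors argument applied to the partial coloring with $\|x\|_2^2 \le n$. Optimizing $k$ against these does not immediately give $\sqrt{\lg n}$ --- a single partial coloring round gives something like $\sqrt n$ --- so one must \emph{recurse}: after one round we have at least half (say) the coordinates frozen, apply the same procedure to the induced submatrix $B$ on the remaining $\le n/2$ columns, using that $\herdisctwo(B) \le \herdisctwo(A)$ by the definition of hereditary discrepancy, and that the eigenvalues only decrease under column deletion in the relevant sense (this is the content of the earlier eigenvalue inequality, applied to $B$). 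After $O(\lg n)$ rounds all coordinates are frozen; each round of the $i$-th level contributes an error that, when summed as a geometric-type series, telescopes to $O(\sqrt{\lg n})\cdot\herdisctwo(A)$ overall --- more precisely, choosing the live-set size to shrink by a constant factor each round and keeping the orthogonal subspace of dimension a small constant fraction of the current live set makes each round's contribution $O(\herdisctwo(A))$ and there are $O(\lg n)$ rounds, but a more careful choice (geometrically decreasing subspace dimensions) squeezes the sum of per-round $\ell_2$ errors, combined in quadrature, down to $O(\sqrt{\lg n})$.

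The hardest part will be the error bookkeeping across the $O(\lg n)$ recursive levels: one has to argue that the ``partial coloring error'' $\|Ax_{\text{partial}} - (\text{contribution from already-frozen coords})\|_2$ accumulated at each level combines in a way that gives an extra $\sqrt{\lg n}$ rather than $\lg n$ (this is the standard partial-coloring gain, but here it must be carried out simultaneously against all the eigenvalues via Corollary~\ref{thm:l2eigen}), and that the orthogonality constraints at level $i$ can genuinely be satisfied, i.e. that the relevant subspace restricted to the current live coordinates still has the dimension we claim --- a rank argument that I expect to follow because deleting rows/columns of $A$ only drops eigenvalues, but which needs care to phrase cleanly. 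The $O((m+n)n^2)$ running time is then a routine accounting: one eigendecomposition of $M = A^TA$ costs $O(mn^2 + n^3)$ to form $M$ and diagonalize it, and the Beck--Fiala walk performs $O(n)$ pivot steps each costing $O(n^2)$ linear algebra within the (at most) $n$-dimensional live space, plus the recursion reuses the same decomposition restricted appropriately or recomputes at cost dominated by the top level.
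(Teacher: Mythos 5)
Your outline captures the right ingredients --- partial coloring, a Beck--Fiala-style walk orthogonal to the top eigenvectors, halving the live set, recursion, and applying Corollary~\ref{thm:l2eigen} to the induced submatrix at each level --- and this is indeed the structure of the paper's proof. But there is a genuine gap in the place you flag as "the hardest part": you never explain \emph{why} the per-round errors combine in quadrature rather than linearly. The standard partial-coloring gain that you invoke works in the $\ell_\infty$ world because the number of active constraints drops geometrically; here there is a single $\ell_2$ objective $\|Ax\|_2$ that is never shed, and without a new idea, a per-round contribution of $O(\sqrt{m}\,\herdisctwo(A))$ accumulated over $\lg n$ rounds gives $O(\lg n\cdot\herdisctwo(A))$, not $O(\sqrt{\lg n}\cdot\herdisctwo(A))$. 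Your suggestion of "geometrically decreasing subspace dimensions" does not fix this: each submatrix still has hereditary discrepancy at most $\herdisctwo(A)$, so every round pays the same linear bound.

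The paper's fix is the one extra ingredient your write-up is missing: when walking in a direction $v$, one flips its sign so that $\langle Ax, Av\rangle\le 0$ for the \emph{current} accumulated $Ax$. This makes the cross term in $\|A(x+\gamma)\|_2^2 = \|Ax\|_2^2 + \|A(\zero\oplus_x\gamma)\|_2^2 + 2\langle Ax, A(\zero\oplus_x\gamma)\rangle$ non-positive, so that it is the \emph{squared} norm increments that are each $O(m\,\herdisctwo(A)^2)$ and these simply telescope over $\lg n$ rounds, yielding $\|Ax_{\mathrm{final}}\|_2^2 = O(\lg n\cdot m\,\herdisctwo(A)^2)$. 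Two smaller issues in the same vein: the orthogonality constraints should come from the eigenvectors of $C^TC$ where $C$ is the live-column submatrix (not the eigenvectors of the original $A^TA$ "restricted to live coordinates" --- those are not eigenvectors of $C^TC$, and the bound $\|C\gamma\|_2^2\le\lambda_{k/4}\|\gamma\|_2^2$ you need really does require eigenvectors of $C^TC$); and you should recurse until no live coordinates remain rather than rounding the last $k$ arbitrarily, since that arbitrary rounding incurs an unbounded $k\cdot\herdisctwo(A)$-type loss that you cannot absorb. With the sign trick added and those two adjustments made, your sketch does become the paper's argument.
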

We implemented our algorithm and performed various experiments to
examine its practical performance. Section~\ref{sec:experiments} shows
that the algorithm far outperforms random sampling a coloring $x \in
\{-1,+1\}^n$. In fact, it far outperforms random sampling, even if we
repeatedly sample vectors for as long time as our algorithm runs and
use the best one sampled. Moreover, the algorithm is efficient enough
that it can be run on $1000 \times 1000$ matrices in less than $10$
seconds and on matrices of sizes up to $10000 \times 10000$ in
about 4 hours on a standard laptop. While it is conceivable that
Bansal's SDP based approach can be modified to give $\ell_2$
guarantees with a polynomial running time, it seems highly unlikely that it can process such large matrices in a reasonable amount of
time. Moreover, our algorithm is much simpler to analyse and
implement.

Finally, we also use one of the ideas in our algorithm to speed up the
Edge-Walk algorithm by Lovett and Meka~\cite{lovett}:
\begin{theorem}
\label{thm:lovett}
The Edge-Walk procedure in~\cite{lovett} can be implemented such
that all $T$ iterations run in a total of $O(Tmn + n^3 + Tn^2)$ time. 
\end{theorem}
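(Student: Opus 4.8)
The plan is to give a careful implementation of the Edge-Walk procedure of~\cite{lovett} and to bound the cost of a single iteration, exploiting two structural features of the walk. First, the only vectors the walk is ever required to step orthogonally to are the rows of $A$ corresponding to nearly-tight discrepancy constraints, together with the standard basis vectors $e_i$ for the currently frozen coordinates; all of these live in $\R^n$, so the subspace they span always has dimension at most $n$ --- no matter how large $m$ is. Second, freezing is monotone: in each iteration the step is drawn orthogonal to every tight constraint, so once the $i$-th entry of $A x_t$ (respectively the coordinate $x_t(i)$) has been frozen it never moves again; hence over a whole run of the walk only $O(n)$ distinct constraints are ever frozen. A naive implementation that in each of the $T$ iterations rebuilds an orthonormal basis of the walk subspace from scratch spends $\Theta(n^3)$ per iteration on that task alone (and the implementation in~\cite{lovett}, which does not exploit that the constraint vectors lie in $\R^n$, pays $\tilde{O}((m+n)^3)$ overall); we instead maintain the basis incrementally.

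Concretely, throughout a run we maintain an $n\times r$ matrix $Q=(q_1,\dots,q_r)$, $r\le n$, whose columns form an orthonormal basis of the span of the currently frozen constraint vectors. In iteration $t$ we: \emph{(i)} sample $g\sim\Norm(\zero,I_n)$ and project it onto the walk subspace via $g'\gets g-Q(Q^\top g)$, in $O(nr)=O(n^2)$ time and without ever forming an $n\times n$ matrix; \emph{(ii)} set $x_t\gets x_{t-1}+\gamma g'$ in $O(n)$ time and compute $A x_t$ in $O(mn)$ time; \emph{(iii)} scan the $m$ discrepancy constraints and the $n$ coordinates in $O(m+n)$ time to find those that have just become tight; and \emph{(iv)} for each newly frozen constraint, with vector $v$ (a row of $A$, or some $e_i$), orthogonalise $v$ against $Q$ by (modified, for numerical stability) Gram--Schmidt and append its normalisation as a new column of $Q$ whenever the residual is nonzero, at cost $O(nr)=O(n^2)$. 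By monotonicity, step~(iv) is executed at most $O(n)$ times over the run, so it costs $O(n^3)$ in aggregate; steps~(i)--(iii) cost $O(mn+n^2)$ per iteration. Thus one run with $T$ iterations on the $m\times n$ matrix costs $O(Tmn+n^3+Tn^2)$; summing this over the $O(\lg n)$ partial-coloring phases of the full-coloring recursion --- in which the number of live columns decreases geometrically, so the $n^3$ terms sum to $O(n^3)$ --- it remains $O(Tmn+n^3+Tn^2)$ with $T$ now the total number of iterations.

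The step that needs the most care is confirming that monotonicity of the frozen set is genuinely a property of Edge-Walk and not something we are imposing. For discrepancy constraints it is immediate from the step distribution, as above. For a coordinate constraint, Edge-Walk freezes $i$ as soon as $|x_t(i)|\ge 1-\delta$ and keeps $x_t(i)$ fixed afterwards; since all later steps are orthogonal to $e_i$ the coordinate indeed does not move again, and the $O(\delta)$ change incurred when the partial coloring is finally rounded to $\{-1,+1\}$ is already accounted for in~\cite{lovett}. The $O(n)$ bound on the number of frozen constraints per run then combines the trivial bound of $n$ on frozen coordinates with the guarantee from~\cite{lovett} that at every step at most a constant fraction of $n$ discrepancy constraints are simultaneously tight (the same invariant that keeps the walk subspace of dimension $\Omega(n)$, so that each phase colors a constant fraction of the live variables); with monotonicity this bounds the number of discrepancy constraints ever frozen by $O(n)$ too. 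So the only real work is the bookkeeping for incremental basis maintenance: no new randomness, potential, or concentration argument beyond those already in~\cite{lovett} is required.
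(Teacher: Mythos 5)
Your proof is correct and takes essentially the same approach as the paper: both incrementally maintain an orthonormal basis for the span of the frozen constraint vectors (at $O(n^2)$ per newly frozen constraint, $O(n^3)$ in aggregate by monotonicity and the $O(n)$ cap on frozen constraints coming from the walk's abort condition), and sample $U_t$ by projecting a standard Gaussian onto the orthogonal complement in $O(n^2)$ per iteration. The extra remarks about summing over $O(\lg n)$ partial-coloring phases and about modified Gram--Schmidt are harmless additions not needed for the theorem as stated.
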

Since their algorithm matching Spencer's six standard devitations
suffice result invoke the Edge-Walk procedure a polylogarithmic number
of times with $T$ being polylogarithmic, we conclude that a coloring
achieving $\disc(A,x) = O(\sqrt{n \ln(em/n)})$ can be found in time
$\tilde{O}(mn+n^3)$. See Section~\ref{sec:lovettmeka} for the proof of Theorem~\ref{thm:lovett}.

\section{Eigenvalue Bounds for Hereditary Discrepancy}
In this section, we prove new results relating the hereditary
discrepancy of a matrix $A$ to the eigenvalues of $A^TA$. The section
is split in two parts, one studying hereditary
$\ell_\infty$-discrepancy and one studying hereditary $\ell_2$-discrepancy.

\subsection{Hereditary $\ell_\infty$-discrepancy}
Our first result concerns hereditary $\ell_\infty$-discrepancy and is a
strengthening of the previous bound due to Chazelle and Lvov~\cite{chazelleLvov}
(see Section~\ref{sec:intro}). The simplest formulation is the following:

\begin{customthm}{\ref{thm:eigenInfty}}
For an $m \times n$ real matrix $A$, let $\lambda_1 \geq \lambda_2
\geq \cdots \geq \lambda_n \geq 0$ denote the eigenvalues of
$A^TA$. For all positive integers $k \leq \min\{n,m\}$, we have 
$$
\herdisc(A) \geq \frac{k}{2e}\sqrt{\frac{\lambda_k}{m n}}.
$$
\end{customthm}

Theorem~\ref{thm:eigenInfty} is an immediate corollary of the
following slightly more general result:

\begin{theorem}
\label{thm:eigenInftyDetails}
For an $m \times n$ real matrix $A$, let $\lambda_1 \geq \lambda_2
\geq \cdots \geq \lambda_n \geq 0$ denote the eigenvalues of
$A^TA$. For all positive integers $k \leq \min\{n,m\}$, we have 
$$
\herdisc(A) \geq \frac{1}{2}\left(\frac{\prod_{i=1}^k
      \lambda_i}{\binom{n}{k} \binom{m}{k}}\right)^{1/2k} 
$$
\end{theorem}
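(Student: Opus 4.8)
The goal is to produce a $k\times k$ submatrix $B$ of $A$ with $|\det B|$ large, and then invoke the determinant lower bound (Theorem~\ref{thm:determinantbound}), which gives $\herdisc(A)\ge \frac12|\det B|^{1/k}$. So the entire task reduces to the linear-algebra question: given that $A^TA$ has eigenvalues $\lambda_1\ge\cdots\ge\lambda_n$, find $k$ rows and $k$ columns of $A$ whose intersection has determinant at least $\bigl(\prod_{i=1}^k\lambda_i / (\binom nk\binom mk)\bigr)^{1/2}$ in absolute value. The natural quantity to control is the sum of squared $k\times k$ minors of $A$: by the Cauchy–Binet formula, $\sum_{|I|=|J|=k}(\det A_{I,J})^2 = \sum_{|J|=k}\det\!\big((A^TA)_{J,J}\big)=\sigma_k(A^TA)$, the $k$-th elementary symmetric polynomial in the eigenvalues of $A^TA$. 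Since there are at most $\binom mk\binom nk$ such minors, the largest one has square at least $\sigma_k(A^TA)/(\binom mk\binom nk)$, and then $\sigma_k(\lambda_1,\dots,\lambda_n)\ge \prod_{i=1}^k\lambda_i$ because $\sigma_k$ is a sum of nonnegative terms one of which is exactly $\prod_{i=1}^k\lambda_i$ (using $\lambda_i\ge 0$, which holds since $A^TA$ is PSD). Combining these three facts yields the claimed minor, hence the theorem.

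Concretely, the steps are: (1) recall the determinant lower bound; (2) state Cauchy–Binet to identify $\sum_{I,J}(\det A_{I,J})^2$ with $\sigma_k$ of the eigenvalues of $A^TA$ — here one should be careful that $A$ is $m\times n$ with possibly $m\ne n$, so "$k\times k$ submatrix" means choosing a $k$-subset $I$ of rows and a $k$-subset $J$ of columns, and the count of pairs is $\binom mk\binom nk$; (3) averaging to extract one pair $(I,J)$ with $(\det A_{I,J})^2 \ge \sigma_k/(\binom mk\binom nk)$; (4) the bound $\sigma_k(\lambda)\ge\prod_{i\le k}\lambda_i$; (5) plug into the determinant bound. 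Then Theorem~\ref{thm:eigenInfty} follows by bounding $\binom nk\le (en/k)^k$ and $\binom mk\le (em/k)^k$, which turns $\bigl(\prod\lambda_i/(\binom nk\binom mk)\bigr)^{1/2k}$ into $\ge \frac{k}{e}\bigl(\prod_{i\le k}\lambda_i\bigr)^{1/2k}/\sqrt{mn}\ge \frac ke\sqrt{\lambda_k/(mn)}$, the last step using $\prod_{i=1}^k\lambda_i\ge\lambda_k^k$ since the $\lambda_i$ are sorted in decreasing order.

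I do not expect a serious obstacle here; the only things to get right are the normalization in Cauchy–Binet for rectangular $A$ and the constant-chasing in the final corollary. One subtlety worth a remark: the determinant bound requires an actual $k\times k$ (square) submatrix, and the argument above produces exactly that — we never need $A$ itself to be square, which is precisely the feature that makes this usable in the partial-coloring regime where $m\gg n$. The analogous statement for $\herdisctwo$ (Corollary~\ref{thm:l2eigen}) will run along the same lines but substituting Theorem~\ref{thm:l2determinant} for Theorem~\ref{thm:determinantbound}, which is why the paper isolates that $\ell_2$ determinant bound first.
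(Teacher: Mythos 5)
Your proof is correct and follows essentially the same route as the paper: identify $\sum_{|I|=|J|=k}(\det A_{I,J})^2$ with the $k$-th elementary symmetric polynomial of the eigenvalues of $A^TA$ via Cauchy--Binet, lower-bound it by $\prod_{i\le k}\lambda_i$, average over the $\binom{m}{k}\binom{n}{k}$ minors, and feed the resulting $k\times k$ submatrix to the determinant lower bound. The only difference is cosmetic: the paper factors the argument through an intermediate $m\times k$ submatrix $C$ (Lemma~\ref{lem:detSubmatrix} picks the columns, then a second Cauchy--Binet application picks the rows), primarily so that $C$ can be reused later in the proof of Corollary~\ref{thm:l2eigen}, whereas you collapse both selections into a single averaging step.
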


Theorem~\ref{thm:eigenInfty} follows from
Theorem~\ref{thm:eigenInftyDetails} by using that $\binom{n}{k} \leq
(en/k)^k$ and that $\prod_{i=1}^k
      \lambda_i \geq \lambda_k^k$. Thus our goal is to prove
      Theorem~\ref{thm:eigenInftyDetails}. The first step of our proof
      uses the following linear algebraic fact:

\begin{lemma}
\label{lem:detSubmatrix}
For an $m \times n$ real matrix $A$, let $\lambda_1 \geq \lambda_2
\geq \cdots \geq \lambda_n \geq 0$ denote the eigenvalues of
$A^TA$. For all positive integers $k \leq n$, there exists
an $m \times k$ submatrix $C$ of $A$ such that 
$
\det(C^TC) \geq (\prod_{i=1}^k \lambda_i)/\binom{n}{k}
$.
\end{lemma}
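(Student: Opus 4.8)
The plan is to express $\prod_{i=1}^k \lambda_i$ in terms of the $k\times k$ minors of $A$ and then argue by averaging. Recall the Cauchy--Binet formula: for any $k$-subset $T$ of the columns of $A$, let $A_T$ be the corresponding $m\times k$ submatrix; then $\det(A_T^T A_T) = \sum_{S} \det(A_{S,T})^2$, where $S$ ranges over $k$-subsets of the rows and $A_{S,T}$ is the $k\times k$ submatrix with rows $S$ and columns $T$. Summing over all $k$-subsets $T$ of columns gives $\sum_T \det(A_T^T A_T) = \sum_{S,T} \det(A_{S,T})^2$. The key identity I would use is that this double sum equals the $k$-th elementary symmetric polynomial in the eigenvalues of $A^TA$, i.e.\ $\sum_{|T|=k} \det(A_T^T A_T) = e_k(\lambda_1,\dots,\lambda_n) = \sum_{|I|=k}\prod_{i\in I}\lambda_i$; this is the standard fact that the sum of the principal $k\times k$ minors of a symmetric matrix is its $k$-th elementary symmetric function of eigenvalues, applied to $A^TA$ (whose principal $k\times k$ minor on column-set $T$ is exactly $\det(A_T^TA_T)$).

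With that identity in hand, the argument is a one-line averaging: since $e_k(\lambda_1,\dots,\lambda_n) \ge \prod_{i=1}^k \lambda_i$ (the term $I=\{1,\dots,k\}$ is one of the nonnegative summands), we get
$$
\sum_{|T|=k} \det(A_T^T A_T) \;\ge\; \prod_{i=1}^k \lambda_i.
$$
There are $\binom{n}{k}$ subsets $T$, so by the pigeonhole/averaging principle there exists some $T$ with $\det(A_T^T A_T) \ge \bigl(\prod_{i=1}^k \lambda_i\bigr)/\binom{n}{k}$. Taking $C := A_T$, an $m\times k$ submatrix of $A$, finishes the proof.

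The only real content is getting the two symmetric-function identities right, so I would state them carefully. The Cauchy--Binet step is routine. The slightly more delicate point is the claim $\sum_{|T|=k}\det(A_T^TA_T) = e_k(\lambda_1,\dots,\lambda_n)$; I expect this to be the main thing to justify cleanly. The cleanest route is: the left side is the sum of all principal $k\times k$ minors of the symmetric PSD matrix $A^TA$, and for any symmetric matrix $M$ with eigenvalues $\mu_1,\dots,\mu_n$, the sum of its principal $k\times k$ minors is the coefficient of $t^{n-k}$ in $\det(tI+M) = \prod_i (t+\mu_i)$, namely $e_k(\mu_1,\dots,\mu_n)$. Applying this with $M = A^TA$ and $\mu_i = \lambda_i$ gives the identity. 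Everything else is bookkeeping, and nonnegativity of all the $\lambda_i$ (which holds since $A^TA$ is PSD) is what makes the averaging step valid.
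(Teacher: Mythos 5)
Your proof is correct and essentially identical to the paper's: both rely on the identity that the sum of all $k\times k$ principal minors of $A^TA$ equals the $k$-th elementary symmetric function of its eigenvalues, lower-bound that by $\prod_{i=1}^k\lambda_i$ using nonnegativity of the $\lambda_i$, and then average over the $\binom{n}{k}$ principal minors. The Cauchy--Binet detour you mention is superfluous for this lemma (the paper invokes Cauchy--Binet only in the subsequent Theorem~\ref{thm:eigenInftyDetails}), but it does no harm.
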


\begin{proof}
The $k$'th symmetric function of $\lambda_1,\dots,\lambda_n$ is
defined as (see e.g. the textbook~\cite{MeyerBook} p. 494):
$
s_k = \sum_{1 \leq i_1 < \cdots < i_k \leq n} \lambda_{i_1} \cdots \lambda_{i_k}
$.
Since all $\lambda_i$ are non-negative, we have $s_k \geq
\prod_{i=1}^k \lambda_i$. If we let $\Prin_k(A^TA)$ denote the set of
all $k \times k$ principal submatrices of $A^TA$, then it also holds
that (see e.g. the textbook~\cite{MeyerBook} p. 494):
$
s_k = \sum_{B \in \Prin_k(A^TA)} \det(B)
$.
Since $|\Prin_k(A^TA)| = \binom{n}{k}$ there must be a $B \in \Prin_k(A^TA)$ for which
$\det(B) \geq \left(\prod_{i=1}^k \lambda_i
\right)/\binom{n}{k}$. Since $B$ is a $k \times k$ principal submatrix
of $A^TA$, it follows that there exists an $m \times k$ submatrix $C$
of $A$ such that $B = C^TC$ and thus $\det(C^TC) \geq \left(\prod_{i=1}^k \lambda_i \right)/\binom{n}{k}$. 
\end{proof}

With Lemma~\ref{lem:detSubmatrix} established, we are ready to present
the proof of Theorem~\ref{thm:eigenInftyDetails}:

\begin{proof}[Proof of Theorem~\ref{thm:eigenInftyDetails}]
Let $A$ be a real $m \times n$ matrix and let $\lambda_1 \geq \cdots
\geq \lambda_n \geq 0$ denote the eigenvalues of $A^TA$. From
Lemma~\ref{lem:detSubmatrix}, it follows that for every $k \leq
n$, there is an $m \times k$ submatrix $C$ of $A$ such that
$
\det(C^TC) \geq (\prod_{i=1}^k \lambda_i)/\binom{n}{k}
$.
If we also have $k \leq m$, we can let $\Prin_k(C)$ denote the set of all $k \times k$ principal
submatrices of $C$ and use the Cauchy-Binet formula to conclude that:
$
\det(C^TC) = \sum_{D \in \Prin_k(C)} \det(D)^2
$.
But $\Prin_k(C) \subseteq \Prin_k(A)$ hence there must exist a $k \times k$ matrix $D \in \Prin_k(A)$ such that 
\begin{eqnarray*}
\det(D)^2 \geq \frac{\det(C^TC) }{ |\Prin_k(C)|} \geq \frac{\prod_{i=1}^k \lambda_i}{\binom{n}{k} \binom{m}{k}} \Rightarrow
|\det(D)| \geq \sqrt{\frac{\prod_{i=1}^k \lambda_i}{\binom{n}{k} \binom{m}{k}}}.
\end{eqnarray*}
It follows from the determinant lower bound for hereditary discrepancy
(Theorem~\ref{thm:determinantbound}) that
$$
\herdisc(A) \geq \frac{1}{2}|\det(D)|^{1/k} \geq \frac{1}{2}\left(\frac{\prod_{i=1}^k
      \lambda_i}{\binom{n}{k} \binom{m}{k}}\right)^{1/2k}.
$$
\end{proof}

Having established a stronger connection between eigenvalues and
hereditary discrepancy than the one given by Chazelle and
Lvov~\cite{chazelleLvov}, we can also re-execute their proof of the
trace bound and obtain the following strengthening:

\begin{customcor}{\ref{cor:trace}}
For an $m \times n$ real matrix $A$, let $M = A^TA$. Then:
$$
\herdisc(A) \geq \frac{\tr^2 M}{8e \min\{n,m\} \tr M^2}\sqrt{\frac{\tr M}{
    \max\{m,n\}}}.
$$
\end{customcor}

\begin{proof}
Let $\lambda_1 \geq \cdots \geq \lambda_n \geq 0$ denote the
eigenvalues of $M$. Chazelle and Lvov~\cite{chazelleLvov} proved that if we choose $k =
\tr^2 M/(2 \tr M^2)$ then $\lambda_k \geq \tr M /(4n)$. Examining
their proof, one can in fact strengthen it slightly to $\lambda_k \geq
\tr M/(4 \min\{m,n\})$ (their proof of (\cite{chazelleLvov} Lemma 2.4) considers a uniform
random eigenvalue $\lambda$ amongst $\lambda_1, \dots, \lambda_n$ and
uses that $\tr M = n \E[\lambda]$. However, one needs only $\lambda$ to
be uniform random amongst the non-zero eigenvalues and there are at
most $\min\{m,n\}$ such eigenvalues yielding $\tr M = \min\{n,m\} \E[\lambda]$). Inserting these bounds in
Theorem~\ref{thm:eigenInfty} gives us
$$
\herdisc(A) \geq \frac{\tr^2 M}{8e \tr M^2}\sqrt{\frac{\tr M}{m n
    \min\{m,n\}}} = \frac{\tr^2 M}{8e \min\{n,m\} \tr M^2}\sqrt{\frac{\tr M}{
    \max\{m,n\}}}.
$$
\end{proof}

\subsection{Hereditary $\ell_2$-discrepancy}
This section proves the following determinant result for hereditary
$\ell_2$-discrepancy of $m \times n$ matrices:

\begin{customthm}{\ref{thm:l2determinant}}
For an $m \times n$ real matrix $A$ with $\det(A^TA) \neq 0$, we have
$$
\herdisc(A) \geq \herdisctwo(A) \geq \sqrt{\frac{n m}{8 \pi e} } \det(A^TA)^{1/2n}.
$$
\end{customthm}

The fact $\herdisc(A) \geq \herdisctwo(A)$ is true for all $A$, thus
the difficulty in proving Theorem~\ref{thm:l2determinant} lies in
establishing that
$
\herdisctwo(A) \geq \sqrt{n m/(8 \pi e)} \det(A^TA)^{1/2n}
$.
Our proof uses many of the ideas from the proof of the
determinant lower bound (Theorem~\ref{thm:determinantbound})
in~\cite{lovasz}. We start by introducing the linear discrepancy in
the $\ell_2$ setting and summarize known relations between linear
discrepancy and hereditary discrepancy.

\begin{definition}
Let $A$ be an $m \times n$ real matrix. Then its linear $\ell_2$-discrepancy
is defined as:
$$
\lindisctwo(A) := \max_{c \in [-1,+1]} \min_{x \in \{-1,+1\}^n} \frac{1}{\sqrt{m}}\|A(x-c)\|_2.
$$
\end{definition}

The linear $\ell_2$-discrepancy has a clean geometric
interpretation (this is a direct translation of the similar
interpretation of linear $\ell_\infty$-discrepancy given
e.g. in~\cite{lovasz, matousek1999geometric}). For an $m \times n$ real matrix $A$, let:
$
U_A := \{x : \|Ax\|_2 \leq \sqrt{m}\}
$.
For $t>0$, place $2^n$ translated copies $U_1,\dots,U_{2^n}$ of $tU_A$ such that
there is one copy centered at each point in $\{-1,+1\}^n$. Then $\lindisctwo(A)$ is the least number $t$ for which the sets $U_j$
  cover all of $[-1,+1]^n$.

We will need the following relationship between the hereditary and
linear discrepancy:

\begin{lemma}[Lov{\'{a}}sz et al.~\cite{lovasz}]
\label{lem:linher}
For all $m \times n$ real matrices $A$, it holds that $\lindisctwo(A) \leq 2 \herdisctwo(A)$.
\end{lemma}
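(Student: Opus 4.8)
The plan is to adapt the standard argument that bounds linear discrepancy by hereditary discrepancy in the $\ell_\infty$ setting (as in~\cite{lovasz, matousek1999geometric}) to the $\ell_2$ norm, which goes through essentially verbatim since the only property of the norm used is the triangle inequality together with the fact that restricting to a subset of columns only decreases the relevant discrepancy. Fix $c \in [-1,+1]^n$; I want to produce $x \in \{-1,+1\}^n$ with $\|A(x-c)\|_2 \leq 2\sqrt{m}\,\herdisctwo(A)$. The classical trick is a dyadic/bit-by-bit rounding of $c$: write each coordinate $c_j$ in binary and round one bit at a time, at each stage replacing a vector with entries that are multiples of $2^{-t}$ by one with entries that are multiples of $2^{-(t+1)}$, while controlling the accumulated error.

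First I would set up the rounding step. Suppose at some stage we have a vector $y \in [-1,+1]^n$ whose entries are integer multiples of $2^{-t}$. Partition the columns into those where the $(t{+}1)$st bit is ``live'' (i.e. $y_j$ is an odd multiple of $2^{-(t+1)}$) and those where it is not; let $T$ be the set of live coordinates and $A_T$ the submatrix of $A$ on those columns. On the live coordinates, $y$ restricted to $T$ scaled by $2^{t+1}$ lands in a coset of $\{-1,+1\}^{|T|}$, so by definition of $\disctwo$ applied to $A_T$ — more precisely by the ``rounding'' formulation that $\disctwo(A_T) \le \herdisctwo(A)$ — there is a choice of $\pm 2^{-(t+1)}$ perturbation on each live coordinate that produces a new vector $y'$, with entries now multiples of $2^{-(t+1)}$, such that $\|A(y - y')\|_2 \leq 2^{-(t+1)}\cdot \sqrt{m}\cdot \disctwo(A_T) \cdot (\text{const})$; the factor-of-two accounting is where I must be careful to recover exactly the constant $2$ in the statement. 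Starting from $y^{(0)} = c$ and iterating, after sufficiently many steps (or in the limit, using a compactness/continuity argument to pass to a genuine $\pm 1$ vector) the accumulated error telescopes as a geometric series $\sum_t 2^{-(t+1)} = 1$ times $2\sqrt{m}\,\herdisctwo(A)$, giving the bound.

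The main obstacle I anticipate is purely bookkeeping rather than conceptual: getting the constant exactly $2$ (and not $4$ or $O(1)$) requires being precise about what ``rounding a coset of the cube'' costs in terms of $\herdisctwo$ versus $\disctwo$, and about whether one rounds the fractional part of $c$ directly or rounds bit-by-bit with a geometric sum — the two bookkeeping schemes give the same asymptotics but different constants. A secondary subtlety is the limiting argument: the bit-by-bit process only terminates if all $c_j$ have finite binary expansions, so for general $c \in [-1,+1]^n$ I would either invoke a continuity argument (the minimum over $x \in \{-1,+1\}^n$ of $\|A(x-c)\|_2$ is continuous in $c$, and dyadic rationals are dense) or note that it suffices to prove the inequality for dyadic $c$ and take a supremum. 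Neither issue is deep, and since the excerpt attributes this lemma to Lov\'asz et al.\ and frames it as ``known,'' I expect the paper to either cite it directly or give this short adaptation.
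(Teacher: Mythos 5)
Your proposal is correct and is the standard dyadic-rounding argument of Lov\'asz, Spencer and Vesztergombi, which is exactly what the paper relies on: the paper does not spell out a proof but remarks that the LSV argument carries over verbatim because it only uses that the norm ball $U_A = \{x : \|Ax\|_2 \le \sqrt{m}\}$ is centrally symmetric and convex --- precisely the properties supporting the triangle-inequality accounting in your argument. To pin down the constant $2$ you flagged: the cleanest bookkeeping is to rescale $c\in[-1,1]^n$ to $d=(c+1)/2\in[0,1]^n$, round $d$ to some $d^{(0)}\in\{0,1\}^n$ bit by bit (each level $t\ge 1$ moves live coordinates by $\pm 2^{-t}$ at $\ell_2$-cost at most $2^{-t}\sqrt{m}\,\herdisctwo(A)$, summing to $<\sqrt{m}\,\herdisctwo(A)$), and then set $x=2d^{(0)}-1\in\{-1,1\}^n$, so $\|A(x-c)\|_2=2\|A(d^{(0)}-d)\|_2<2\sqrt{m}\,\herdisctwo(A)$; the factor $2$ is exactly the $[0,1]\to[-1,1]$ rescaling, and the extension from dyadic to general $c$ by continuity/compactness is as you describe.
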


We remark that~\cite{lovasz} proved Lemma~\ref{lem:linher} only for the
$\ell_\infty$-discrepancy, but their proof only uses the fact that $\{ x
: \|Ax\|_\infty \leq 1\}$
is centrally symmetric and convex (see~\cite{lovasz} Lemma 1). The
same is true for the $U_A$ defined above.

In light of Lemma~\ref{lem:linher}, we set out to lower bound the
linear discrepancy of an $m \times n$ matrix $A$ in terms of
$\det(A^TA)$. We will prove the following lemma using an adaptation of
the ideas in~\cite{lovasz} (we have not been able to find a proof of this
result elsewhere, but remark that the case
of $m = n$ should follow by adapting the proof in~\cite{lovasz}):

\begin{lemma}
\label{lem:lineardeterm}
Let $A$ be an $m \times n$ real matrix with $\det(A^TA) \neq 0$. Then
$
\lindisctwo(A) \geq \sqrt{n/(2 \pi e m)} \det(A^TA)^{1/2n}
$.
\end{lemma}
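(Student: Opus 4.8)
The plan is to combine the geometric interpretation of $\lindisctwo(A)$ recorded just above the lemma with a one-line volume/packing argument, exactly mirroring the classical proof of the determinant lower bound in~\cite{lovasz}. Write $t := \lindisctwo(A)$. By that interpretation, the cube $[-1,+1]^n$ is covered by the $2^n$ translated copies of $tU_A$ centered at the points of $\{-1,+1\}^n$, where $U_A = \{y : \|Ay\|_2 \le \sqrt{m}\}$. Comparing volumes,
$$
2^n = \vol\!\left([-1,+1]^n\right) \le \sum_{x \in \{-1,+1\}^n} \vol(x + tU_A) = 2^n\, t^n\, \vol(U_A),
$$
so that $t \ge \vol(U_A)^{-1/n}$. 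It therefore suffices to give a good upper bound on $\vol(U_A)$.

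Since $\det(A^TA) \neq 0$, the matrix $M := A^TA$ is symmetric positive definite, so $U_A = \{y : y^T M y \le m\}$ is a genuine bounded ellipsoid (this is also exactly why the covering argument is not vacuous). Applying the change of variables $z = M^{1/2} y$, the region becomes $\sqrt{m}\, B$ where $B$ is the Euclidean unit ball in $\R^n$, and the Jacobian contributes a factor $\det(M)^{-1/2}$; hence $\vol(U_A) = m^{n/2}\,\vol(B)/\det(M)^{1/2}$. Using $\vol(B) = \pi^{n/2}/\Gamma(n/2+1)$ together with the standard estimate $\Gamma(z+1) \ge (z/e)^z$ at $z = n/2$, we get $\vol(B) \le \pi^{n/2}(2e/n)^{n/2} = (2\pi e/n)^{n/2}$, and therefore
$$
\vol(U_A) \le \left(\frac{2\pi e m}{n}\right)^{n/2} \det(A^TA)^{-1/2}.
$$
Substituting into $t \ge \vol(U_A)^{-1/n}$ yields $t \ge \sqrt{n/(2\pi e m)}\,\det(A^TA)^{1/2n}$, which is the claimed bound.

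I do not expect a serious obstacle here; the only places that require care are (i) justifying the ellipsoid volume formula and recording that invertibility of $A^TA$ makes $U_A$ bounded, and (ii) tracking the constant in the unit-ball volume estimate so that the $2\pi e$ in the denominator comes out exactly — the $\Gamma(z+1) \ge (z/e)^z$ bound is the cleanest way to do this, and any looser Stirling-type estimate would change only the constant, not the shape of the bound.
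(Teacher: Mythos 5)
Your proof is correct and follows essentially the same route as the paper's: covering the cube $[-1,+1]^n$ by translated copies of $t U_A$, comparing volumes to get $t^n \vol(U_A) \ge 1$, computing the ellipsoid volume as $m^{n/2} v_n / \sqrt{\det(A^TA)}$, and bounding $v_n = \pi^{n/2}/\Gamma(n/2+1) \le (2\pi e/n)^{n/2}$. The only cosmetic difference is that you explicitly derive the $v_n$ estimate from $\Gamma(z+1)\ge(z/e)^z$, whereas the paper simply quotes the inequality.
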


\begin{proof}
From the geometric interpretation given earlier, we know that if we
place a copy of
$\lindisctwo(A) U_A$ on each point in $\{-1,+1\}^n$, then they cover
all of $[-1,1]^n$ hence $\vol(\lindisctwo(A) U_A) \geq
\vol([-1,1]^n)/2^n = 1$. But
\begin{eqnarray*}
\vol(\lindisctwo(A) U_A) &=& (\lindisctwo(A))^n \vol(U_A) \\
&=&  (\lindisctwo(A))^n\vol(\{x : \|Ax\|_2 \leq \sqrt{m}\}) \\
&=& (\lindisctwo(A))^n\vol(\{x : x^TA^TAx \leq m\}).
\end{eqnarray*}
Observe now that $\{x : x^TA^TAx \leq m\} = \{x : x^T(m^{-1}A^TA)x \leq 1\}$ is an ellipsoid. It is
well-known that the volume of such an ellipsoid equals
$v_n/\sqrt{\det(m^{-1}A^TA)} = v_n/\sqrt{m^{-n} \det(A^TA)}$ where $v_n$ is the volume of the
$n$-dimensional $\ell_2$ unit ball. Since 
$
v_n = \pi^{n/2}/\Gamma(n/2+1) \leq (2 \pi e/n)^{n/2}
$, we conclude:
\begin{eqnarray*}
1 &\leq&  \frac{(\lindisctwo(A))^n v_n}{\sqrt{m^{-n}\det(A^TA)}} \Rightarrow \\
1 &\leq&  (\lindisctwo(A))^n \left(\frac{2 \pi e m}{ n}\right)^{n/2}
\frac{1}{\sqrt{\det(A^TA)}} \Rightarrow \\
\lindisctwo(A) &\geq& \sqrt{\frac{n}{2 \pi e m}} \det(A^TA)^{1/2n}.
\end{eqnarray*}
\end{proof}

Combining Lemma~\ref{lem:linher} and Lemma~\ref{lem:lineardeterm}
proves Theorem~\ref{thm:l2determinant}.

Having establishes Theorem~\ref{thm:l2determinant}, we are ready to
prove our last result on hereditary $\ell_2$-discrepancy:

\begin{customcor}{\ref{thm:l2eigen}}
For an $m \times n$ real matrix $A$, let $\lambda_1 \geq \lambda_2
\geq \cdots \geq \lambda_n \geq 0$ denote the eigenvalues of
$A^TA$. For all positive integers $k \leq \min\{n,m\}$, we have 
$
\herdisctwo(A) \geq (k/e) \sqrt{\lambda_k/(8 \pi m n)}
$.
\end{customcor}

\begin{proof}
Let $A$ be an $m \times n$ real matrix and let $\lambda_1 \geq \cdots
\geq \lambda_n \geq 0$ be the eigenvalues of $A^TA$. From
Lemma~\ref{lem:detSubmatrix}, we know that for all $k \leq n$, there
is an $m \times k$ submatrix $C$ of $A$ such that
$
\det(C^TC) \geq (\prod_{i=1}^k \lambda_i)/\binom{n}{k} \geq
(k \lambda_k/(e n))^k
$.
From Theorem~\ref{thm:l2determinant}, we get that
$\herdisctwo(C) \geq  \sqrt{k/(8 \pi e m)} \det(C^TC)^{1/2k}
\geq (k/e) \sqrt{\lambda_k/(8 \pi m n)}$.
Since $C$ is obtained from $A$ by deleting a subset of the columns, it
follows that $\herdisctwo(A) \geq \herdisctwo(C)$, completing the proof.
\end{proof}
 
\section{Discrepancy Minimization with Hereditary $\ell_2$ Guarantees}
\label{sec:algo}
This section gives our new algorithm for discrepancy minimization. The
goal is to prove the following:

\begin{customthm}{\ref{thm:algo}}
There is an $O((m+n)n^2)$ time algorithm that given an $m
\times n$ matrix $A$, computes a coloring $x \in \{-1,+1\}^n$
satisfying $\disctwo(A,x) = O(\sqrt{\lg n} \cdot \herdisctwo(A))$.
\end{customthm}

Our algorithm follows the same overall approach as several previous algorithms. The general setup is that we first give a procedure for
partial coloring. This procedure takes a matrix $A$ and a partial
coloring $x \in [-1, + 1]^n$. We say that coordinates $i$ of $x$ such
that $|x_i| < 1$ are \emph{live}. If there are $k$ live
coordinates prior to calling the partial coloring method, then upon
termination we get a new vector $\gamma$ such that the number of live
coordinates in $\hat{x} = x + \gamma$ is no more than $k/2$. At the
same time, all coordinates of $\hat{x}$ are bounded by $1$ in absolute
value and $\|A\hat{x}\|_2$ is not much larger than $\|A x\|_2$.

We start by presenting the partial coloring algorithm and then show
how to use it to get the final coloring.

\subsection{Partial Coloring}
In this section, we present our partial coloring algorithm. The algorithm takes as input an $m \times n$ matrix $A$ and a vector $x \in [-1 ,+1]^n$. We think of
the vector $x$ as a partial coloring. We call a coordinate $x_i$ of $x$ \emph{live} if $|x_i| < 1$ and we let $k$ denote the number of
live coordinates in $x$. For ease of notation, we let $\live_x(i)$ denote the index of the $i$'th live coordinate in $x$ and we define $\oplus_x : \R^n \times \R^k \to \R^n$ as the function such that $a \oplus_x b$ for $a \in \R^n$ and $b \in \R^k$, is the vector obtained from $a$ by adding the $i$'th coordinate of $b$ to the coordinate of index $\live_x(i)$ in $a$ (where $\live_x(i)$ refers to the $i$'th live coordinate in $x$).

Upon termination, the algorithm returns another vector $\gamma \in \R^k$. If we let $\hat{x} = x \oplus_x \gamma$ be the vector in $\R^n$ obtained from $x$ by adding $\gamma_i$ to $x_{\live_x(i)}$, then the partial coloring algorithm guarantees the following:
\begin{enumerate}
\item There are at most $k/2$ live coordinates in $\hat{x}$.
\item For all $i$, we have $|\hat{x}_i| \leq 1$.
\item $\|A\hat{x}\|_2^2-\|Ax\|_2^2 = O(m (\herdisctwo(A))^2)$.
\end{enumerate}
Thus upon termination, the new vector $\hat{x}$ has half as many live coordinates, and the discprenacy did not increase by much. In particular the change is related to the
hereditary $\ell_2$-discrepancy of $A$.

The main idea in our algorithm is to use the connection between
eigenvalues and hereditary $\ell_2$-discrepancy that we proved in
Corollary~\ref{thm:l2eigen}. Our algorithm proceeds in iterations, where in each step it finds a vector $v$ and adds it to $\gamma$. The way we choose $v$ is roughly to find the 
eigenvectors of $A^TA$ and then pick $v$ orthogonal to the
eigenvectors corresponding to the largest eigenvalues. This bounds the
difference $\|A(x\oplus_x (\gamma+v))\|_2-\|A(x \oplus_x \gamma)\|_2$ in terms of the eigenvalues and thus hereditary $\ell_2$-discrepancy. At the same time, we use the ideas by Beck and Fiala (and many later papers) where we include constraints forcing $v$ orthogonal to $e_i$
for every coordinate $i$ that is not live. The algorithm is as follows:

\paragraph{PartialColor($A$, $x$):}
\begin{enumerate}
\item Let $k$ denote the number of live coordinates in $x$ and let $C$ denote the $m \times k$ matrix obtained from $A$ by
  deleting all columns corresponding to coordinates that are not live.
\item Initialize $\gamma = \textbf{0} \in \R^k$. 
\item Compute an eigendecomposition of $C^TC$ to obtain the
  eigenvalues $\lambda_1 \geq \cdots \geq \lambda_k \geq 0$ and corresponding eigenvectors $\mu_1,\dots,\mu_k$.
\item While \textbf{True}:
\begin{enumerate}
\item Compute the set $S$ of coordinates $i$ such that $|\gamma_i + x_{\live_x(i)}|=1$. If $|S| \geq k/2$, \textbf{return} $\gamma$.
\item Find a unit vector $v$ orthogonal to all $e_j$ with $j \in S$ and to all $\mu_i$ with $i \leq k/4$.
\item Let $\sigma = -\sign(\langle Ax, A(\textbf{0} \oplus_x v)\rangle)$. Compute the largest $\beta>0$ such that all coordinates of $x \oplus_x (\gamma+\sigma \beta v)$ are less than or equal to $1$ in absolute value. Update $\gamma \gets \gamma + \sigma \beta v$.
\end{enumerate}
\end{enumerate}

\paragraph{Correctness.}
We prove that the vector $\gamma$ returned by the above \textbf{PartialColor} algorithm satisfies the three claimed properties. First observe that in every iteration of the while loop, we find a vector $v$ that is orthogonal to $e_i$ whenever $|\gamma_i + x_{\live_x(i)}|  = 1$. Hence if $|\gamma_i + x_{\live_x(i)}|$ becomes $1$, it never changes again. Moreover, by maximizing $\beta$ in each iteration, we guarantee that at least one more coordinate satisfies $|\gamma_i + x_{\live_x(i)}|  = 1$ after every iteration. Thus the algorithm terminates after at most $k/2$ iterations of the while loop and no coordinate of $x \oplus_x \gamma$ is larger than $1$ in absolute value. What remains is to bound $\|A(x \oplus_x \gamma)\|_2^2 - \|Ax\|_2^2$.

Let $v^{(i)}$ denote the vector $v$ found during the $i$'th iteration of the while loop. Upon termination, we have that $\gamma=\sigma_1 \beta_1 v^{(1)} + \cdots + \sigma_r \beta_r v^{(r)}$ where $\sigma_i = \sign(\langle Ax, v^{(i)}\rangle)$ and each $v^{(i)}$ is orthogonal to $\mu_1,\dots,\mu_{k/4}$. Thus $\gamma$ is also orthogonal to $\mu_1,\dots,\mu_{k/4}$. We therefore have:
\begin{eqnarray*}
\|A(x \oplus_x \gamma)\|_2^2 &=& \|A(x + (\textbf{0} \oplus_x \gamma))\|_2^2 \\ 
&\leq& \|Ax\|_2^2 + \|A(\textbf{0} \oplus_x \gamma)\|_2^2 + 2 \langle Ax, A(\textbf{0} \oplus_x \gamma) \rangle \\
&=& \|Ax\|_2^2 + \|C \gamma \|_2^2 + 2 \sum_{i=1}^r \langle Ax, A(\textbf{0} \oplus_x \sigma_i \beta_i v^{(i)}) \rangle \\
&\leq& \|Ax\|_2^2  + \lambda_{k/4} \|\gamma\|_2^2 - 2 \sum_{i=1}^r \sign(\langle Ax,A(\textbf{0} \oplus_x v^{(i)})\rangle)  \langle Ax, A(\textbf{0} \oplus_x \beta_i v^{(i)}) \rangle \\
&=& \|Ax\|_2^2  + \lambda_{k/4} \|\gamma\|_2^2 - 2 \sum_{i=1}^r \sign(\langle Ax,A(\textbf{0} \oplus_x v^{(i)})\rangle) ^2 |\langle Ax, A(\textbf{0} \oplus_x \beta_i v^{(i)}) \rangle| \\
&\leq& \|Ax\|_2^2  + \|\gamma\|_\infty^2 k \lambda_{k/4}-0\\
&\leq& \|Ax\|_2^2  + 4 k \lambda_{k/4}.
\end{eqnarray*}
We would like to use Corollary~\ref{thm:l2eigen} to relate $k
\lambda_{k/4}$ to the hereditary discrepancy of $A$. Since $C$ is an $m \times k$ submatrix of $A$, we have $\herdisctwo(A) \geq \herdisctwo(C)$. Using Corollary~\ref{thm:l2eigen} we have $\herdisctwo(C)  \geq (k/4e) \sqrt{\lambda_{k/4}/mk} = (1/4e)\sqrt{k \lambda_{k/4}/(8 \pi) m}$. Hence we conclude that
$$
\|A\hat{x}\|_2^2-\|Ax\|_2^2 \leq 128 e^2\pi m (\herdisctwo(A))^2 = O(m (\herdisctwo(A))^2).
$$

\paragraph{Running Time.}
Step 1. of \textbf{PartialColor} takes $O(mk)$ time and step 2. takes $O(k)$. Step 3. takes $O(mk^2)$ time to compute $C^TC$ (can be improved via fast matrix multiplication) and $O(k^3)$ time to compute the eigendecomposition. As argued above, each iteration of the while loop increases the size of $S$ by at least one. Hence there are no more than $k/2$ iterations of the loop. Computing $S$ in step (a) takes $O(k)$ time. Finding the unit vector $v$ in step (b) can be done in $O(k^2)$ time as follows: Whenever adding a coordinate $i$ to $S$, use Gram-Schmidt to compute the normalized (unit-norm) projection $\hat{e}_i$ of $e_i$ onto the orthogonal complement of $\mu_1,\dots,\mu_{k/4}$ and all previous vectors $\hat{e}_j$. This takes $O(k^2)$ time per $i$. To find $v$, sample a uniform random unit vector in $\R^k$ and run Gram-Schmidt to compute its projection onto the orthogonal complement of $\hat{e}_j$ for $j \in S$ and $\mu_1,\dots,\mu_{k/4}$. The expected length of the projection is $\Omega(1)$ and we can scale it to unit length afterwards. This gives the desired vector. The Gram-Schmidt step takes $O(k^2)$ time. Computing $A(\textbf{0} \oplus_x v)$ in step (c) takes $O(mk)$ time and computing $Ax$ can be done outside the while loop in $O(mn)$ time. The inner product takes $O(m)$ time to compute. Computing $\beta$ and adding $\sigma \beta v$ to $\gamma$ takes $O(k)$ time. Overall, the \textbf{PartialColor} algorithm takes $O(mn + mk^2 + k^3)$ time. If $Ax$ is given as argument to the algorithm, the time is further reduced to $O((m+k)k^2)$.

\subsection{The Final Algorithm}
Now that we have the \textbf{PartialColor} algorithm, getting to a
low discrepancy coloring is straight forward. Given an $m \times n$
matrix $A$, we initialize $x \gets \zero$. We then repeatedly invoke
\textbf{PartialColor}($A$, $x$). Each call returns a vector
$\gamma$. We update $x
\gets x + \gamma$ and continue. We stop
once there are no live coordinates in $x$, i.e. all coordinates
satisfy $|x_i| = 1$.

In each iteration, the number of live coordinates of $i$ decreases by
at least a factor two, and thus we are done after at most $\lg n$
iterations. This means that the final vector $x$ satisfies
\begin{eqnarray*}
\|Ax\|_2^2 &\leq& \lg n \cdot O(m (\herdisctwo(A))^2)
\Rightarrow \\
\|Ax\|_2 &=& O(\sqrt{m \lg n} \cdot \herdisctwo(A)) \Rightarrow \\
\disctwo(A,x) &=& O(\sqrt{\lg n} \cdot \herdisctwo(A)).
\end{eqnarray*}
For the running time, observe that after each call to \textbf{PartialColor}, we can compute $A(x+\gamma)$ from $Ax$ in $O(mk)$ time. Thus we can provide $Ax$ as argument to \textbf{PartialColor} and thereby reduce its running time to $O((m+k)k^2)$. Since $k$ halves in each iteration, we get a running time of
$$
O\left(\sum_{i=1}^{\lg n} (m + n/2^i)(n/2^i)^2\right) = O((m+n)n^2).
$$
This concludes the proof of Theorem~\ref{thm:algo}.

\section{Faster Walking on the Edge}
\label{sec:lovettmeka}
In this section, we discuss how our ideas from Section~\ref{sec:algo} may be used to speed up the $\ell_\infty$ discrepancy minimization algorithm Edge-Walk by Lovett and Meka~\cite{lovett}. We have restated the theorem here for convenience:
\begin{customthm}{\ref{thm:lovett}}
The Edge-Walk procedure in~\cite{lovett} can be implemented such
that all $T$ iterations run in a total of $O(Tmn + n^3 + Tn^2)$ time. 
\end{customthm}
We have shown the full Edge-Walk procedure here:\\\\
\textbf{Edge-Walk:} For $t=1,\dots,T$ do:
\begin{enumerate}
\item Let $\mathcal{C}^{\textrm{var}}_t :=\mathcal{C}^{\textrm{var}}_t(X_{t-1}) = \{i \in [n] : |(X_{t-1})_i| \geq 1-\delta \}$ be the set of variable constraints 'nearly hit' so far.
\item Let $\mathcal{C}^{\textrm{disc}}_t :=\mathcal{C}^{\textrm{disc}}_t(X_{t-1}) = \{j \in [m] : |\langle X_{t-1} - x_0, v_j \rangle| \geq c_j - \delta| \}$ be the set of discrepancy constraints 'nearly hit' so far.
\item Let $\mathcal{V}_t := \mathcal{V}(X_{t-1}) = \{u \in \R^n : u_i = 0 \forall i \in \mathcal{C}_t^{\textrm{var}}, \langle u,v_j \rangle=0 \forall j\in \mathcal{C}^{\textrm{disc}}_t \}$ be the linear subspace orthogonal to the 'nearly hit' variable and discrepancy constraints.
\item Set $X_t := X_{t-1} + \gamma U_t$, where $U_t \sim \mathcal{N}(V_t)$.
\end{enumerate}
We give a brief overview of the idea in the above and refer the reader to~\cite{lovett} for more details. Step 1. finds variables that are no longer live, step 2. finds rows  $j$ of the input matrix where the current coloring almost yields a larger discrepancy than a predefined threshold $c_j$. Step 3. defines the subspace $\mathcal{V}_t$ of vectors that are orthogonal to the rows and variables that are nearly violated and step 4. picks a vector $U_t$ by sampling an $\mathcal{N}(0,1)$ random variable $g_i$ for each vector $v_i$ in an orthonormal basis for $\mathcal{V}_t$ and letting $U_t = \sum_i g_i v_i$.

Step 1. takes $O(n)$ time and step 2. takes $O(nm)$ time, for a total of $O(Tnm)$ over all $T$ iterations. Lovett and Meka charge a total of $O((n+m)^3)$ for steps 3. and 4. by explicitly computing the subspace $\mathcal{V}_t$ and sampling a vector from it.

To speed up the above, we notice that as soon as a variable or discrepancy constraint enters either $\mathcal{C}^{\textrm{var}}_t$ or $\mathcal{C}^{\textrm{disc}}_t $ it never leaves again. When a new constraint enters, it either adds the constraint $\langle u, e_i \rangle = 0$ or $\langle u, v_j \rangle=0$ for some $j$. We maintain an orthonormal basis for the subspace spanned by these $e_i$'s and $v_j$'s. This is done as follows: Assume the current orthonormal basis is $x_1,\dots,x_k$ for some $k$ and we receive the constraint $\langle u , y \rangle = 0$ for some $y \in \R^n$. We then use Gram-Schmidt to compute the projection $y^\bot$ of $y$ onto the orthogonal complement of $\spn(x_1,\dots,x_k)$. If $y^\bot \neq 0$, we normalize it and add it to $x_1,\dots,x_k$. This takes $O(nk) = O(n^2)$ time since $|\mathcal{C}^{\textrm{var}}_t| \leq n$ and the Edge-Walk procedure is allowed to fail if $|\mathcal{C}^{\textrm{disc}}_t| \geq n$ (see the analysis in Lovett and Meka~\cite{lovett}). The orthogonalization step just described is performed at most $2n$ times by the same argument, thus the total time for adding constraints is $O(n^3)$. Finally we need to argue how we can sample $U_t$. This is done simply by sampling a vector $y$ in $\R^n$ with each coordinate $\mathcal{N}(0,1)$ distributed. We then compute the projection $y^\bot$ of $y$ onto the orthogonal complement of $\spn(x_1,\dots,x_k)$ using Gram-Schmidt. The resulting vector $y^\bot$ is precisely $\mathcal{N}(V_t)$ distributed. This took $O(n^2)$ time and is performed at most $T$ times. The total running time is hence $O(Tmn+n^3 + Tn^2)$.

\section{Experiments}
\label{sec:experiments}
In this section, we present a number of experiments to test the practical performance of our hereditary $\ell_2$ discrepancy minimization algorithm. We denote the algorithm by \Minimize in the following.  We compare it to two base line algorithms \Sample and \SampleMany. \Sample simply picks a uniform random $\{-1,+1\}$ vector as its coloring. \SampleMany repeatedly samples a uniform random $\{-1,+1\}$ vector and runs for the same amount of time as \Minimize. It returns the best vector found within the time limit.

The algorithms were implemented in Python, using NumPy and SciPy for linear algebra operations. All tests were run on a MacBook Pro (15-inch, Late 2013) running macOS Sierra 10.13.3. The machine has a 2 GHz Intel Core i7 and 8GB DDR3 RAM.

We tested the algorithms on three different classes of matrices: 
\begin{itemize}
\item \textbf{Uniform} matrices: Each coordinate is uniform random and independently chosen among $-1$ and $+1$.
\item \textbf{2D Corner} matrices: Obtained by sampling two sets $P=\{p_1,\dots,p_n\}$ and $Q = \{q_1,\dots,q_m\}$ of $n$ and $m$ points in the plane, respectively. The points are sampled uniformly in the $[0,1] \times [0,1]$ unit square. The resulting matrix has one column per point $p_j \in P$ and one row per point $q_i \in Q$. The entry $(i,j)$ is 1 if $p_j$ is \emph{dominated} by $q_i$, i.e. $q_i.x > p_j.x$ and $q_i.y > p_j.y$ and it is $0$ otherwise. Such matrices are known to have hereditary $\ell_2$-discrepancy $O(\lg^2n)$~\cite{larsenDisc} (we are not aware of better upper bounds).
\item \textbf{2D Halfspace} matrices: Obtained by sampling a set $P = \{p_1,\dots,p_n\}$ of $n$ points in the unit square $[0,1] \times [0,1]$, and a set $Q$ of $m$ halfspace. Each halfspace in $Q$ is sampled by picking one point $a$ uniformly on either the left boundary of the unit square or on the top boundary, and another point $b$ uniformly on either the right boundary or the bottom boundary of the unit square. The halfspace is then chosen uniformly to be either everything above the line through $a,b$ or everything below it. The resulting matrix has one column per point $p_j \in P$ and one row per halfspace $h_i \in Q$. The entry $(i,j)$ is $1$ if $p_j$ is in the halfspace $h_i$ and it is $0$ otherwise. Such matrices are known to have hereditary $\ell_2$-discrepancy $O(n^{1/4})$~\cite{matousek:half}.
\end{itemize}

Each test is run $10$ times and the average $\ell_2$ discrepancy and average runtime is reported. The running times of the algorithms varied exclusively with the matrix size and not the type of matrix, thus we only show one time column which is representative of all types of matrices. The results are shown in Table~\ref{tbl}.
\begin{center}
\begin{table}[h]

\centering
\begin{tabular}{|c|c|r|r|r|r|}
\hline
Algorithm & Matrix Size & Disc Uniform & Disc 2D Corner & Disc 2D Halfspace & Time (s)\\
\hline
\Minimize & $200 \times 200$ & 7.2 & 1.8 & 1.6 & $< 1$\\
\Sample & $200 \times 200$ & 13.8 & 7.6 & 11.0 & $<1$\\
\SampleMany & $200 \times 200$ & 11.6 & 2.3 & 2.7 & $<1$\\
\hline
\Minimize & $1000 \times 1000$ & 15.7 & 1.9 & 2.3 & 9\\
\Sample &  $1000 \times 1000$ & 31.6 & 16.0 & 18.3 & $<1$\\
\SampleMany &  $1000 \times 1000$ & 28.9 & 4.9 & 5.5 & 9\\
\hline
\Minimize & $4000 \times 4000$ & 31.0 & 2.1 & 2.6 & 717\\
\Sample &  $4000 \times 4000$ & 63.1 & 21.0 & 34.0 & $<1$\\
\SampleMany &  $4000 \times 4000$ & 60.3 & 9.5 & 10.7 & 717\\
\hline
\Minimize & $10000 \times 10000$ & 48.3 & 2.1 & 3.1 & 15260\\
\Sample &  $10000 \times 10000$ & 99.9 & 51.4 & 96.8 & $<1$\\
\SampleMany &  $10000 \times 10000$ & 96.8 & 14.2 & 15.6 & 15260\\
\hline
\Minimize & $10000 \times 2000$ & 35.9 & 2.1 & 2.7 & 535\\
\Sample &  $10000 \times 2000$ & 44.7 & 20.6 & 24.1 & $<1$\\
\SampleMany &  $10000 \times 2000$ & 43.4 & 6.7 & 8.0 & 535\\
\hline
\Minimize & $2000 \times 10000$ & 21.4 & 1.8 & 2.0 & 5809\\
\Sample &  $2000 \times 10000$ & 99.9 & 40.8 & 70.8 & $<1$\\
\SampleMany &  $2000 \times 10000$ & 92.2 & 13.8 & 16.4 & 5809\\
\hline
\end{tabular}
\caption{Results of experiments with our \Minimize algorithm. The Matrix Size column gives the size $m \times n$ of the input matrix. The Disc columns shows $\disctwo(A,x) = \|Ax\|_2/\sqrt{m}$ for the coloring $x$ found by the algorithm on the given type of matrix. Time is measured in seconds. Each entry is the average of 10 executions.}
\label{tbl}
\end{table}
\end{center}
The table clearly shows that \Minimize gives superior colorings for all types of matrices and all sizes. The tendency is particularly clear on the structured matrices \textbf{2D Corner} and \textbf{2D Halfspace} where the coloring found by \Minimize on $10000 \times 10000$ matrices is a factor 25-30 smaller than a single round of random sampling (\Sample) and a factor 5-7 better than random sampling for as long time as \Minimize runs (\SampleMany).

The $O((m+n)n^2)$ running time makes the algorithm practical up to matrices of size about $10000 \times 10000$, at which point the algorithm runs for $15260$ seconds $\approx$ $4$ hours and $15$ minutes. 

\section{Acknowledgment}
The author wishes to thank Nikhil Bansal for useful discussions and
pointers to relevant literature. The author also thanks an
anonymous STOC reviewer for comments that simplified the
\textbf{PartialColor} algorithm.

\bibliographystyle{abbrv}
\bibliography{biblio}

\appendix

\end{document}